\documentclass[12pt, draftclsnofoot, onecolumn, romanappendices]{IEEEtran}

\usepackage{cite}

\ifCLASSINFOpdf
   \usepackage[pdftex]{graphicx}
  \DeclareGraphicsExtensions{.pdf,.jpeg,.png}
\else
\fi

\usepackage[cmex10]{amsmath}
\usepackage{amssymb}
\usepackage{enumitem}
\usepackage{subcaption}
\usepackage[font=footnotesize,justification=centering]{caption}
\usepackage{epstopdf}
\usepackage{setspace}

\hyphenation{op-tical net-works semi-conduc-tor}

\usepackage{amsthm}
\usepackage{url}

\newtheorem{theorem}{Theorem}
\newtheorem{corollary}{Corollary}
\newtheorem{lemma}{Lemma}
\newtheorem{proposition}{Proposition}
\newtheorem{definition}{Definition}
\newtheorem{remark}{Remark}
\newtheorem{example}{Example}

\DeclareMathOperator{\Mat}{Mat}
\DeclareMathOperator{\rk}{rank}
\DeclareMathOperator{\Nm}{Nm}
\DeclareMathOperator{\diag}{diag}
\DeclareMathOperator{\vect}{vec}
\DeclareMathOperator{\id}{id}
\DeclareMathOperator{\Aut}{Aut}
\DeclareMathOperator{\ord}{ord}
\DeclareMathOperator*{\argmin}{arg\,min}
\DeclareMathOperator{\SNR}{SNR}

\usepackage{tikz}
\usetikzlibrary{arrows, positioning}
\tikzset{
    >=stealth',
    user/.style={
           rectangle,
           rounded corners,
           draw,
           minimum height=2em,
           minimum width=1cm,
           text centered},
    relay/.style={
           rectangle,
           rounded corners,
           draw,
           minimum height=1em,
           minimum width=1cm,
           text centered},
    pil/.style={
           ->,
           shorten <=2pt,
           shorten >=2pt},
    pil_rev/.style={
           <-,
           shorten <=2pt,
           shorten >=2pt},
    pil_dash/.style={
    		->, dashed,
    		shorten <=2pt,
    		shorten >=2pt}
}

\begin{document}

\title{Fast-Decodable Space--Time Codes for the $N$-Relay and Multiple-Access MIMO Channel}

\author{Amaro~Barreal, Camilla~Hollanti,~\IEEEmembership{Member,~IEEE}, and~Nadya~Markin,~\IEEEmembership{Member,~IEEE,}
\thanks{A. Barreal and C. Hollanti are with the Department
of Mathematics and Systems Analysis, Aalto University, Finland (e-mail: firstname.lastname@aalto.fi). They are financially supported by the Academy of Finland grants \#276031, \#282938, and \#283262, and by a grant from Magnus Ehrnrooth Foundation, Finland. The support from the European Science Foundation under the COST Action IC1104 is also gratefully acknowledged.}
\thanks{N. Markin is with the School of Physical and Mathematical Sciences, Nanyang Technological University, Singapore (e-mail: nmarkin@ntu.edu). She is financially supported by the Singapore National Research Foundation under Research Grant NRF-RF2009-07.
Part of this research was carried out while N. Markin was visiting Aalto University, January 2012.}
\thanks{Parts of this paper were presented in MTNS12 \cite{hollanti_relay1} and ISIT12 \cite{hollanti_relay2}.}}

\maketitle

\begin{abstract}
	In this article, the first general constructions of fast-decodable, more specifically (conditionally) $g$-group decodable, space--time block codes for the Nonorthogonal Amplify and Forward (NAF) Multiple-Input Multiple-Output (MIMO) relay channel under the half-duplex constraint are proposed. In this scenario, the source and the intermediate relays used for data amplification are allowed to employ multiple antennas for data transmission and reception. The worst-case decoding complexity of the obtained codes is reduced by up to $75\%$. In addition to being fast-decodable, the proposed codes achieve full-diversity and have nonvanishing determinants, which has been shown to be useful for achieving the optimal Diversity-Multiplexing Tradeoff (DMT) of the NAF channel.
	
	Further, it is shown that the same techniques as in the cooperative scenario can be utilized to achieve fast-decodability for $K$-user MIMO Multiple-Access Channel (MAC) space--time block codes. The resulting codes in addition exhibit the conditional nonvanishing determinant property which, for its part, has been shown to be useful for achieving the optimal MAC-DMT.
\end{abstract}

\begin{IEEEkeywords}
 Central simple algebras, distributed space--time block codes, fading channels, fast-decodability, lattices,  multiple-access channel (MAC), multiple-input multiple-output (MIMO),  relay channel. 
\end{IEEEkeywords}

\IEEEpeerreviewmaketitle

\section{Introduction}
\IEEEPARstart{T}{he} amount of data stored and the data traffic worldwide has reached incredible numbers. It was estimated that in 2011, $1800\cdot 10^{18}$ bytes of data needed to be stored worldwide, and astonishing $5200\cdot 10^{18}$ bytes of information have been created between January and November 1$^{\text{st}}$ 2014 \cite{emc}. The availability of such an astronomical amount of data and rapid progress in communications engineering and wireless communications explain the observed growth of mobile data traffic, which increased from $0.82\cdot 10^{18}$ bytes at the end of 2012 to $1.5\cdot 10^{18}$ bytes at the end of 2013, whereof $56\%$ of the traffic was mobile video traffic. In addition, about 526 million mobile devices and connections were added globally in 2013, and the number of mobile-connected devices will exceed the number of people on earth by the end of 2014 
\cite{cisco}.

These facts illustrate that networks will soon need to accommodate many new types of devices and be up to the enormous load while still living up to the expectations of exigent future users, which will be accessing any type of data from different devices at any time and from any corner of the world, demanding high reliability, reasonable speed, low energy consumption, etc. 

With this goal in mind, a tremendous effort is being made by both academic and industrial researchers focusing on the future 5$^{\text{th}}$ Generation (5G) wireless systems. Although many aspects still need to be discussed, as of today it is clear that 5G will consist of an integration of different techniques rather than being a single new technology, including distributed antenna systems and massive Multiple-Input Multiple-Output (MIMO) systems \cite{ericsson}.
Yet the various radio-access technologies to be included in 5G are only one side of the coin, as the overall performance of future networks will highly depend on the channel coding techniques employed. 

As a second motivating aspect, the issue of reliably storing the worldwide available data led to considering distributed storage systems, and a plethora of research has been done in the last few years regarding optimal storage codes, thus focusing on the network layer. However, an important aspect that needs consideration is data repair and reconstruction over wireless channels to provide flexibility and user mobility, even if the storage cloud itself would be wired, a feature related to the more general concept of \emph{wireless edge} \cite{rabaey1, rabaey2, rabaey3}. Many of the known algebraic physical layer communications techniques are however futile in this scenario due to the high decoding complexity they require. This calls for less complex coding techniques and transmission protocols, for instance introducing helping relays, as proposed in \cite{barreal_dss} or subsequent work \cite{hollanti_dss}.

\subsection{Related Work and Contributions}
\label{subsec:contributions}

Fast-Decodable (FD) codes are codes enjoying reduced complexity of Maximum-Likelihood (ML) decoding due to a smart inner structure allowing for parallelization in the ML search. 
First introduced in \cite{viterbo}, FD codes have been the subject of much interest \cite{biglieri, jithamithra, srinath, markin, vehkalahti, berhuy, berhuy2}. 
Fast-decodability of a code is achieved precisely when a subset of the generating matrices of the code satisfies certain {\emph{mutual orthogonality conditions}} \cite{ren, berhuy2}, which will be made explicit later on. 

Recent results \cite{berhuy2, berhuy, vehkalahti} show that fast-decodabilty imposes  constraints on the rate of the code on one hand, and on the algebraic parameters of the code, on the other. In particular, codes arising from division algebras (division resulting in full-diversity) can enjoy a reduction in decoding complexity order by a factor of no more than 4, \emph{i.e.}, a decoding complexity reduction of 75\%. In this paper, the best possible complexity reduction by a factor of 4 is in fact achieved.

On the other hand, the increasing interest in cooperative diversity techniques motivates the study of distributed codes. Since the introduction of the multiple-access relay channel \cite{kramer}, many protocols have been considered for data exchange in this scenario, such as the amplify-and-forward \cite{chen} or compute-and-forward \cite{nazer} protocol. Several distributed codes have been proposed \cite{laneman}, \cite{jing}, and it was in \cite{kiran} where the issue of the high decoding complexity of distributed codes was firstly addressed. Codes with low decoding complexity have then been constructed \emph{e.g.}, in \cite{rajan}, using Clifford algebras as the underlying structure. As far as the authors' are aware, all attempts to construct FD distributed codes, however, assume a single antenna at the source and the relays and furthermore do not achieve the Nonvanishing Determinant (NVD) property. A first framework for constructing NVD codes with reduced decoding complexity was proposed in \cite{hollanti_relay1}, and explicit examples were given in \cite{hollanti_relay1,hollanti_relay2}. In \cite{barreal_icmcta}, the authors of the present paper constructed the first FD distributed Space--Time (ST) codes for the MIMO channel and provided simulation results illustrating that imposing fast-decodability does not have an adverse effect on the performance of the codes. The constructions are however not general, but rather very specific example codes. 

The aim of this article is to provide methods to construct codes with desirable properties for good performance and reduced decoding complexity for flexible distributed and noncooperative multiuser physical layer MIMO communications.  
The main contributions of this article are:
\begin{itemize}
	\item Theorem~\ref{thm:single_antenna_relay_code}, which provides a method for constructing an infinite family of FD distributed ST codes having code rate 4 real symbols per channel use (rscu) and satisfying the NVD property for any number $N$ of relays. The theorem assumes a single antenna at the source and each of the relays. The resulting codes exhibit a worst-case decoding complexity $|S|^{5N}$ as opposed to $|S|^{8N}$ of a non-FD code of the same rank, where $S\subset \mathbb{Z}$ is the finite signaling alphabet used. The codes achieve the optimal Diversity-Multiplexing gain Tradeoff (DMT) for the relay channel when the destination has two antennas. 
		
	\item Theorem~\ref{thm:mult_antenna_relay_code}, resulting in a construction of an infinite family of FD distributed ST codes with code rate 2 rscu and the NVD property for $N = \frac{p-1}{2}$ relays, $p \ge 5$ prime. We assume  that the number of antennas $n_s$ at the source and the number of antennas $n_r$ at each relay satisfy $n_s+n_r = 4$, while one antenna suffices at the destination. The resulting codes have worst-case decoding complexity $|S|^{4N}$ or $|S|^{2N}$, respectively corresponding to a $50\%$ or $75\%$ reduction from the complexity $|S|^{8N}$ of a non-FD code of the same rank. According to a recent result \cite{berhuy}, $75\%$ is the best possible reduction for a division algebra based code. 
	
To the best of the authors' knowledge, we obtain the first distributed ST codes for multiple antennas which are FD and have the NVD property, excluding the example codes in \cite{barreal_icmcta}.
		
	\item Multiple explicit constructions, alongside simulation results disclosing the performance of FD distributed ST codes constructed using the methods derived in this article. 
	
	\item Extension of the results on FD distributed ST codes to the noncooperative MIMO Multiple-Access Channel (MAC) for an arbitrary number $K$ of users, resulting in FD codes for this scenario which achieve the conditional NVD property. 
	
	\item As a nontechnical contribution, the paper is written in a self-contained way, also providing a concise overview of algebraic FD ST-codes. 
\end{itemize}

With the exception Theorem~\ref{thm:single_antenna_relay_code}, which was presented as a preliminary result in \cite{hollanti_relay1}, this article contains exclusively novel theoretical and numerical results. 
 
The paper is organized as follows: We start with a recapitulation of ST codes in Section~\ref{sec:stc} and how they are constructed from cyclic division algebras. We also briefly study fast-decodability and the notions of conditional $g$-group and $g$-group decodability. Further, we recall the iterative ST code construction from cyclic algebras. In Section~\ref{sec:distributed} we propose two constructive methods to obtain FD ST codes with the NVD property for an $N$-relay channel under the half-duplex constraint, where the source and each of the relays are equipped with either one or multiple antennas. We then show in Section~\ref{sec:mac_stc} how these constructions can help obtain FD ST codes in the $K$-user MIMO-MAC. Section~\ref{sec:conclusion} concludes the paper.

\section{Space--Time Codes}
\label{sec:stc}

The increasing demand for user mobility observed during the past decades has motivated a plethora of research in the area of wireless communications. The change from the well-studied case of wired communications to data transmission over wireless networks called for novel coding techniques that were able to deal with the fading effects of wireless channels. From the start, algebraic and number theoretical tools have been proven useful for constructing well-performing codes, at first considering only devices at both ends of the channel equipped with a single antenna each. The rapid progress in communications engineering quickly led to considering multiple antennas at both ends of the channel for data rate increase. Considering this type of channels, known as \emph{multiple-input multiple-output} channels, \emph{space--time coding} was introduced as a promising technique for error prevention when transmitting information in the MIMO scenario, a process which can be modeled as 
\begin{equation}\label{trans_model}
	Y_{n_d\times T} = H_{n_d\times n_s}X_{n_s \times T} + N_{n_d \times T},
\end{equation}
where the subscripts $n_s$, $n_d$ and $T$ denote the number of antennas at the source, at the destination, and the number of channel uses, respectively. In the above equation, $Y$ and $X$ are the received and transmitted codewords, $H$ is the random complex \emph{channel matrix} modeling fading, typically assumed to be Rayleigh distributed, and $N$ is a noise matrix whose entries are complex white Gaussian with zero mean and variance $\sigma^2$. We assume that the channel is quasi-static, that is, $H$ stays fixed during the transmission of the whole ST block, and then changes independently of its previous state. The destination is assumed to have perfect channel state information (CSI-D).     

In order to avoid accumulation of the received signals, forcing a discrete (\emph{e.g.}, a lattice) structure on the code is helpful. In this article we will only consider linear ST block codes.

\begin{definition}\label{def:stc}
	Let $\lbrace B_i \rbrace_{i=1}^{k}$ be an independent set of fixed $n_s \times T$ complex matrices. A \emph{linear space--time block code} of rank $k$ is a set of the form
$
	\mathcal{X} = \Biggl\lbrace \sum\limits_{i=1}^{k}{s_i B_i} \Biggm\vert s_i \in S \Biggr\rbrace,
$
where $S \subset \mathbb{Z}$ is the finite \emph{signaling alphabet} used. 
\end{definition}
\begin{definition}\label{def:rate}
The \emph{code rate} of $\mathcal{X}$ is defined as $R = k/T$ real symbol per channel use (rscu), and the code is said to be \emph{full-rate} (for $n_d$ destination antennas) if $k = 2 n_d T$, that is, $R = 2 n_d$.  
\end{definition}

\begin{remark}
In literature, the code rate is commonly defined in complex symbols. However, since we connect the rate to the lattice dimension, it is more convenient to define it over the real alphabet, for not every lattice has a $\mathbb{Z}[i]$-basis. We also want to point out that, here, the channel may be asymmetric $(n_s\neq n_d)$, and hence \emph{full} rate is more meaningfully defined as the maximum rate that still maintains the discrete structure at the receiver and allows for linear detection methods such as sphere-decoding. If the code matrix carries more than  $2 n_d T$ symbols, the received signals will accumulate, and it is thus not desirable to exceed the rate $2n_d$. 
\end{remark}

Henceforth, we will refer to a linear ST block code simply as a ST code, and to its defining matrices $B_i$ as \emph{weight matrices}. Throughout the paper, $S \subset \mathbb{Z}$ will denote the finite signaling alphabet accompanying the considered ST code $\mathcal{X}$, and the superscripts $^\dagger$ and $^T$ the Hermitian conjugate and transpose of a matrix, respectively.

\begin{definition}
A ST code $\mathcal{X}$ as above whose weight matrices $\lbrace B_i \rbrace_{i=1}^{k}$ form a basis of a \emph{lattice} $\Lambda \subset \Mat(n_s\times T, \mathbb{C})$, that is a discrete Abelian subgroup of $\Mat(n_s\times T,\mathbb{C})$, is called a \emph{ST lattice code} of rank $k = \rk(\Lambda)$, $k\leq 2n_sT$. In case of equality, $\Lambda$ is called a full-rank lattice. 
\end{definition}	  
	 
Consider a ST code $\mathcal{X}$ and let $X \neq X'$ denote code matrices ranging over $\mathcal{X}$. We briefly recall the most important design criteria for ensuring a reliable performance:
\begin{itemize}
	\item \emph{Diversity gain:} $\min_{X \neq X'} \rk(X-X') = \min\{n_s,T\}$. A ST code satisfying this criterion is called a \emph{full-diversity} code.

	\item \emph{Coding gain:} $\Delta_{\min} := \min_{X \neq X'} \det[(X-X')(X-X')^{\dagger}]$ should be (after normalization to unit volume, see \cite{hollanti_mindet}) as big as possible. If $\inf \Delta_{\min}  > 0$ for the infinite code 
	\[
	\mathcal{X}_\infty = \left\{ \sum\limits_{i=1}^{k}{s_i B_i} \Biggm\vert s_i \in \mathbb{Z}\right\},
\]
	\emph{i.e.}, the determinants do not vanish when the code size increases, the ST code is said to have the \emph{Nonvanishing Determinant} (NVD) property. 
\end{itemize}

These criteria can be ensured by choosing the algebraic structure underlying the codes in a smart way. Indeed, Algebraic Number Theory and the theory of central simple algebras and their orders have been proven useful for constructing good ST codes (see \cite{hollanti_order1,hollanti_order2,vehkalahti2,oggier_perfect} among others).

\subsection{Space--Time Codes from Division Algebras}
\label{subsec:cda}

For the rest of this paper, we assume that that the number $n_s$ of transmit antennas and the number of channel uses $T$ coincide, unless stated otherwise. Due to this assumption, the delay $T$ grows with the total number of transmit antennas (in the virtual channel, see Section~\ref{subsec:relay_channel}).

Division algebras were first considered in \cite{sethuraman} as a tool for ST coding, leading to fully-diverse codes. The NVD property was first achieved in \cite{belfiore} for the \emph{Golden code}, and the results were generalized to other \emph{Perfect codes} in \cite{oggier_perfect}, where the lattices used for code construction were additionally forced to be orthogonal. The orthogonality requirement was later sacrificed in \cite{hollanti_order1,hollanti_order2,vehkalahti2} for improved performance, and the use of maximal orders was proposed to get denser lattices and higher coding gains. Finally, it was noted in \cite{hollanti_mindet,vehkalahti2} that the comparison of different ST codes requires meaningful normalization. We will now revise some of these notions. 

\begin{proposition}\label{prop:full-diversity}\cite[Prop.~1]{sethuraman}
	Let $\mathbf{F}$ be a field and $\mathcal{D}$ a division $\mathbf{F}$-algebra. Let $\phi: \mathcal{D} \mapsto \Mat(n, \mathbf{F})$ be a ring homomorphism and $\mathcal{X} \subset \phi(\mathcal{D})$ a finite subset. Then, $\rk(X-X') = n$ for any distinct $X, X' \in \mathcal{X}$. 
\end{proposition}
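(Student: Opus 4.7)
The plan is to exploit the two complementary facts that (i) a division algebra has no nontrivial two-sided ideals and (ii) every nonzero element of $\mathcal{D}$ is a unit. I would begin by observing that $\ker\phi$ is a two-sided ideal of $\mathcal{D}$, so it is either $\{0\}$ or all of $\mathcal{D}$. If $\ker\phi = \mathcal{D}$ then $\phi(\mathcal{D}) = \{0\}$, in which case no two distinct elements $X, X'$ exist in $\mathcal{X}$ and the statement is vacuously true. Hence one may assume $\phi$ is injective.

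Next, for distinct $X, X' \in \mathcal{X} \subseteq \phi(\mathcal{D})$, write $X = \phi(d)$ and $X' = \phi(d')$ with $d,d' \in \mathcal{D}$. Injectivity of $\phi$ combined with $X \neq X'$ gives $d \neq d'$, so the difference $d - d' \in \mathcal{D}$ is nonzero. Since $\mathcal{D}$ is a division algebra, $d - d'$ admits a multiplicative inverse $e \in \mathcal{D}$, i.e., $(d-d')\,e = e\,(d-d') = 1_{\mathcal{D}}$.

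Applying the ring homomorphism $\phi$ to both sides yields $\phi(d-d')\,\phi(e) = \phi(1_{\mathcal{D}})$. Under the standard convention that $\phi$ is unital, $\phi(1_{\mathcal{D}}) = I_n$, so $X - X' = \phi(d - d')$ is invertible in $\Mat(n,\mathbf{F})$ and therefore has rank $n$. This forces $\rk(X - X') = n$, as claimed.

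The only real obstacle I expect is the unitality issue: if $\phi$ is merely a (not necessarily unital) ring homomorphism, then $\phi(1_{\mathcal{D}})$ is an idempotent $E$ of some rank $r \leq n$, and the argument above only shows that $X - X'$ is a unit in the subalgebra $E\cdot\Mat(n,\mathbf{F})\cdot E$, giving rank $r$ rather than $n$. In that case one would either restrict the codomain to the image of $E$ and reinterpret $n$ accordingly, or simply add unitality to the hypothesis, which is the standard reading in the space--time coding literature where $\phi$ is a left-regular representation sending $1_{\mathcal{D}}$ to $I_n$.
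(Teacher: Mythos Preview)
Your argument is correct and is the standard one. Note that the paper itself does not supply a proof of this proposition; it is stated with a citation to \cite[Prop.~1]{sethuraman} and used as background. The original reference proves it in essentially the same way you do: since $\mathcal{D}$ is division, any nonzero difference of preimages is a unit, and a ring homomorphism carries units to units, so $X-X'$ is invertible in $\Mat(n,\mathbf{F})$. Your observation about unitality is also appropriate; in the intended application $\phi$ is the left-regular representation, which is unital, so the issue does not arise.
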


Full-diversity can thus be guaranteed by choosing the underlying algebraic structure to be a division algebra, while imposing a further algebraic restriction will also ensure the NVD property. Among different types of division algebras, \emph{Cyclic Division Algebras} (CDAs) from number field extensions have been proposed in \cite{sethuraman} and heavily used for ST coding ever since.

\begin{definition}\label{def:cda}
	Let $\mathbf{K}/\mathbf{F}$ be a cyclic Galois extension of degree $n$ of number fields, and fix a generator $\sigma$ of its cyclic Galois group $\Gamma(\mathbf{K}/\mathbf{F})$. A \emph{cyclic algebra} of degree $n$ is a triple 
\[	
	\mathcal{C} = (\mathbf{K}/\mathbf{F}, \sigma, \gamma) := \bigoplus\limits_{i=0}^{n-1}{u^i\mathbf{K}},
\]
where $u^n = \gamma \in \mathbf{K}^\times$ and $\kappa u = u \sigma(\kappa)$ for all $\kappa \in \mathbf{K}$. 
The algebra $\mathcal{C}$ is \emph{division}, if every nonzero element of $\mathcal{C}$ is invertible.
\end{definition}

\begin{remark}\label{rmk:quat_algebra}
	If $n = 2$, then $\mathbf{K} = \mathbf{F}(\sqrt{a})$ for some square-free $a \in \mathbb{Z}$. The algebra $\mathcal{C} = (\mathbf{K}/\mathbf{F},\sigma,\gamma)$ is known as a \emph{quaternion algebra}, and can equivalently be denoted as 
$
		\mathcal{C} = (a,\gamma)_{\mathbf{F}} \cong \mathbf{K} \oplus \mathbf{j} \mathbf{K} \cong \mathbf{F} \oplus \mathbf{i}\mathbf{F} \oplus \mathbf{j}\mathbf{F} \oplus \mathbf{k}\mathbf{F}, 
$ 
	where the basis elements satisfy $\mathbf{i}^2 = a$, $\mathbf{j}^2 = \gamma$, $\mathbf{ij} = -\mathbf{ji} = \mathbf{k}$. The case $a = \gamma = -1$ gives rise to the famous \emph{Hamiltonian quaternions} and well-known \emph{Alamouti code}.  
\end{remark}

The following lemmas, the first being a straightforward generalization of an original result due to A. Albert, give us simple ways to determining whether a cyclic algebra $\mathcal{C}$ is division. Denote by $\Nm_{\mathbf{K}/\mathbf{F}}(\cdot)$ the field norm of $\mathbf{K}$ over $\mathbf{F}$.

\begin{lemma}\label{lem:cda}\cite[Prop.~2.4.5]{hollanti_thesis}
	Let $\mathcal{C} = (\mathbf{K}/\mathbf{F},\sigma,\gamma)$ be a cyclic algebra of degree $n$. If $\gamma$ is chosen such that $\gamma^{n/p} \notin \Nm_{\mathbf{K}/\mathbf{F}}(\mathbf{K}^\times)$ for all primes $p \mid n$, then $\mathcal{C}$ is a division algebra. 
\end{lemma}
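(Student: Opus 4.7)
The plan is to bound the order of the class $[\mathcal{C}]$ in the Brauer group $\mathrm{Br}(\mathbf{F})$ and then invoke the classical coincidence of exponent and Schur index for central simple algebras over number fields. Two standard algebraic inputs drive the computation: first, the tensor-product formula
\[
	(\mathbf{K}/\mathbf{F},\sigma,\gamma_1) \otimes_{\mathbf{F}} (\mathbf{K}/\mathbf{F},\sigma,\gamma_2) \sim (\mathbf{K}/\mathbf{F},\sigma,\gamma_1\gamma_2)
\]
in $\mathrm{Br}(\mathbf{F})$, which iterates to $[\mathcal{C}]^m = [(\mathbf{K}/\mathbf{F},\sigma,\gamma^m)]$; and second, the splitting criterion stating that $(\mathbf{K}/\mathbf{F},\sigma,\beta)$ is split (trivial in $\mathrm{Br}(\mathbf{F})$) if and only if $\beta \in \Nm_{\mathbf{K}/\mathbf{F}}(\mathbf{K}^\times)$.

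First I would show that the exponent $e$ of $[\mathcal{C}]$ equals $n$. Since $\gamma \in \mathbf{F}^\times$ and $[\mathbf{K}:\mathbf{F}] = n$, one has $\Nm_{\mathbf{K}/\mathbf{F}}(\gamma) = \gamma^n$, so $\gamma^n$ is a norm and $[\mathcal{C}]^n$ is trivial, giving $e \mid n$. Suppose for contradiction that $e$ is a proper divisor of $n$. Then some prime $p \mid n$ satisfies $e \mid n/p$, say $n/p = eq$ with $q \in \mathbb{Z}_{>0}$. Closure of the subgroup of norms under multiplication then yields $\gamma^{n/p} = (\gamma^e)^q \in \Nm_{\mathbf{K}/\mathbf{F}}(\mathbf{K}^\times)$, contradicting the hypothesis that $\gamma^{n/p}$ is not a norm. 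Hence $e = n$.

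To conclude, I would invoke the Albert--Brauer--Hasse--Noether theorem, according to which the exponent and the Schur index of any class in $\mathrm{Br}(\mathbf{F})$ agree whenever $\mathbf{F}$ is a number field. The Schur index of $\mathcal{C}$ is therefore $n$, equal to its degree. Writing the Wedderburn decomposition $\mathcal{C} \cong \Mat_r(\mathcal{D})$ with $\deg \mathcal{D} = n/r$, equality of index and degree forces $r = 1$, so $\mathcal{C} = \mathcal{D}$ is a division algebra. The only nontrivial ingredient is the equality of exponent and Schur index over number fields, which is the deep input from class field theory; everything else is a routine manipulation inside the Brauer group.
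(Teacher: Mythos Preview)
The paper does not supply its own proof of this lemma; it is quoted as a known result with a citation to \cite[Prop.~2.4.5]{hollanti_thesis}, so there is nothing to compare against directly. Your Brauer-group argument is correct: the tensor-product formula and the norm criterion for splitting give the exponent computation, and the divisor argument showing $e=n$ is clean.

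One point worth flagging explicitly is that your proof genuinely depends on $\mathbf{F}$ being a number field, since the equality of exponent and Schur index (via Albert--Brauer--Hasse--Noether) is where the conclusion ``division'' is extracted. Over a general base field the hypothesis only forces the exponent of $[\mathcal{C}]$ to equal its degree, which need not imply that $\mathcal{C}$ is division. In the paper this assumption is implicit, as cyclic algebras are introduced (Definition~\ref{def:cda}) specifically for cyclic extensions of number fields, so your invocation of ABHN is fully justified in context. The paper describes the lemma as ``a straightforward generalization of an original result due to A.~Albert,'' which is consistent with your approach: Albert's classical criterion treats the case where one checks all powers $\gamma^k$, $1\le k<n$, and the reduction to prime-index powers $n/p$ is exactly the exponent argument you carry out.
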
 

\begin{lemma}\label{lem:quat}\cite[Thm.~7.1]{unger}
	Let $\mathbf{F}$ be a number field with ring of integers $\mathcal{O}_{\mathbf{F}}$, and $\mathfrak{p}$ a prime ideal of $\mathcal{O}_{\mathbf{F}}$, with corresponding $\mathfrak{p}$-adic valuation $\nu_{\mathfrak{p}}(\cdot)$. Let $a \in \mathbf{F}$ be such that $\nu_{\mathfrak{p}}(a) = 1$. Then, for any element $\gamma \in \mathcal{O}_{\mathbf{F}}$ which is not a square $\bmod\ \mathfrak{p}$, the quaternion algebra $(a,\gamma)_{\mathbf{F}}$ is division.
\end{lemma}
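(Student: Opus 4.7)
The plan is to reduce the claim to a norm-nonresidue statement via Lemma~\ref{lem:cda} applied with $n=2$, and then defeat the norm equation using $\mathfrak{p}$-adic valuation parities.

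Specifically, set $\mathbf{K} = \mathbf{F}(\sqrt{a})$, so the nontrivial $\mathbf{F}$-automorphism $\sigma$ of $\mathbf{K}$ sends $\sqrt{a} \mapsto -\sqrt{a}$ and $(a,\gamma)_{\mathbf{F}} \cong (\mathbf{K}/\mathbf{F},\sigma,\gamma)$. By Lemma~\ref{lem:cda} (the only prime dividing $n=2$ is $p=2$, and $\gamma^{n/p}=\gamma$), it suffices to show $\gamma \notin \Nm_{\mathbf{K}/\mathbf{F}}(\mathbf{K}^\times)$. Assume for contradiction that $\gamma = \Nm_{\mathbf{K}/\mathbf{F}}(x + y\sqrt{a}) = x^2 - a y^2$ for some $x,y \in \mathbf{F}$.

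Next I would extract the numerical contradiction from the $\mathfrak{p}$-adic valuation. First, because $\gamma$ is not a square modulo $\mathfrak{p}$ (and $0$ is a square), $\gamma \not\equiv 0 \pmod{\mathfrak{p}}$, hence $\nu_{\mathfrak{p}}(\gamma)=0$. On the other hand, $\nu_{\mathfrak{p}}(x^2)=2\nu_{\mathfrak{p}}(x)$ is even, while $\nu_{\mathfrak{p}}(ay^2) = 1 + 2\nu_{\mathfrak{p}}(y)$ is odd (using $\nu_{\mathfrak{p}}(a)=1$); in particular, $\nu_{\mathfrak{p}}(x^2) \neq \nu_{\mathfrak{p}}(ay^2)$. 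Hence
\[
0 = \nu_{\mathfrak{p}}(\gamma) = \nu_{\mathfrak{p}}(x^2 - ay^2) = \min\bigl\{2\nu_{\mathfrak{p}}(x),\, 1 + 2\nu_{\mathfrak{p}}(y)\bigr\}.
\]
If the minimum were attained by the odd term $1 + 2\nu_{\mathfrak{p}}(y)$, the value would be odd, contradicting $\nu_{\mathfrak{p}}(\gamma)=0$. Therefore $\nu_{\mathfrak{p}}(x)=0$ and $1 + 2\nu_{\mathfrak{p}}(y) > 0$, i.e., $\nu_{\mathfrak{p}}(y) \geq 0$.

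Finally, I would reduce the identity $\gamma = x^2 - ay^2$ modulo $\mathfrak{p}$. Since $x$ is a $\mathfrak{p}$-adic unit and $ay^2 \in \mathfrak{p}\mathcal{O}_{\mathfrak{p}}$, we get $\gamma \equiv x^2 \pmod{\mathfrak{p}}$ in the localisation at $\mathfrak{p}$. This exhibits $\gamma$ as a square in the residue field $\mathcal{O}_{\mathbf{F}}/\mathfrak{p}$, contradicting the hypothesis. Hence no such $x,y$ exist, $\gamma$ is not a norm from $\mathbf{K}$, and the quaternion algebra $(a,\gamma)_{\mathbf{F}}$ is division.

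The argument is essentially routine once framed correctly, so there is no serious obstacle; the only point to be careful about is the parity bookkeeping for the two possible valuations of $x$ and $y$ (including negative values), but the even/odd mismatch makes both cases collapse cleanly without any assumption on the residue characteristic of $\mathfrak{p}$.
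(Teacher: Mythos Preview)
Your argument is correct. Note, however, that the paper does not actually prove Lemma~\ref{lem:quat}: it is quoted verbatim as \cite[Thm.~7.1]{unger} and used as a black box in the examples. So there is no ``paper's own proof'' to compare against.

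That said, your route is the natural one and essentially the standard local argument: reduce via Lemma~\ref{lem:cda} to showing $\gamma\notin\Nm_{\mathbf{K}/\mathbf{F}}(\mathbf{K}^\times)$, and then exploit the parity mismatch between $\nu_{\mathfrak{p}}(x^2)$ (even) and $\nu_{\mathfrak{p}}(ay^2)$ (odd) to force $\nu_{\mathfrak{p}}(x)=0$, $\nu_{\mathfrak{p}}(y)\ge 0$, whence $\gamma\equiv x^2\pmod{\mathfrak p}$. Two small points worth making explicit for completeness: (i) the hypothesis $\nu_{\mathfrak{p}}(a)=1$ also guarantees $a\notin(\mathbf{F}^\times)^2$, so $\mathbf{K}=\mathbf{F}(\sqrt{a})$ really is a quadratic extension and the identification $(a,\gamma)_{\mathbf{F}}\cong(\mathbf{K}/\mathbf{F},\sigma,\gamma)$ is legitimate; (ii) in the final reduction step $x$ lies only in the localisation $\mathcal{O}_{\mathfrak{p}}$, not necessarily in $\mathcal{O}_{\mathbf{F}}$, but the residue fields coincide, so the conclusion that $\gamma$ is a square in $\mathcal{O}_{\mathbf{F}}/\mathfrak{p}$ is unaffected. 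With these remarks, your proof is complete and self-contained, which is a small gain over the paper's bare citation.
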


A further advantage of using CDAs for ST coding is that a lattice structure is easily ensured by restricting the choice of elements to a ring within the CDA known as an \emph{order}.

\begin{definition}
	Let $\mathbf{K}/\mathbf{F}$ be an extension of number fields, $\mathcal{O}_{\mathbf{F}}$ the ring of integers of $\mathbf{F}$, and $\mathcal{C}$ a $\mathbf{K}$-central algebra. 
An $\mathcal{O}_{\mathbf{F}}$-\emph{order} $\mathcal{O}$ in $\mathcal{C}$ is a subring of $\mathcal{C}$ that shares the same identity element as $\mathcal{C}$ and so that $\mathcal{O}\cdot\mathbf{F} = \mathcal{C}$. 
Further, \emph{maximality} is defined with respect to inclusion. 
\end{definition}

We can easily construct a ST code (cf. Def.~\ref{def:stc}) by using the left-regular representation of the CDA. To that end, let $\mathcal{O}$ be an order within the CDA $\mathcal{C} = (\mathbf{K}/\mathbf{F},\sigma,\gamma)$ of degree $n$ and $\alpha = \sum_{i=0}^{n-1}{\alpha_i u^i} \in \mathcal{O}$. The representation of $\alpha$ over the maximal subfield $\mathbf{K}$ is given by   
\begin{equation}\label{eqn:left_reg}
	\lambda: \alpha \mapsto \left[\begin{smallmatrix} 
\alpha_0 & \gamma\sigma(\alpha_{n-1}) & \cdots & \gamma\sigma^{n-1}(\alpha_1) \\
\vdots & \vdots & & \vdots \\
\alpha_{n-1} & \sigma(\alpha_{n-2}) & \cdots & \sigma^{n-1}(\alpha_0)	
	\end{smallmatrix}\right].
\end{equation}
The map $\lambda$ is an isomorphism, allowing us to identify an element $\alpha$ with its matrix representation.

Let $k$ be the absolute degree of $\mathcal{C}$ over $\mathbb{Q}$ and $\lbrace B_i \rbrace_{i=1}^{k}$ a matrix basis of $\mathcal{O}$ over $\mathbb{Q}$. A ST code constructed from the order $\mathcal{O}$ for a fixed signaling alphabet $S \subset \mathbb{Z}$ is of the form
\[
	\mathcal{X} = \Biggl\lbrace \sum\limits_{i=1}^{k}{s_i B_i} \Biggm\vert s_i \in S \Biggr\rbrace. 
\]

By choosing $\mathcal{C}$ to be division, and since $\lambda: \mathcal{O} \to \Mat(n,\mathbf{K})$ is a ring homomorphism, the code $\mathcal{X} \subset \lambda(\mathcal{O})$ is a finite subset of a lattice, and by Prop.~\ref{prop:full-diversity} is fully diverse. Moreover, the restriction of the elements to an $\mathcal{O}_{\mathbf{F}}$-order $\mathcal{O}$ ensures that for any matrix $\lambda(a)$, $\det(\lambda(a)) \in \mathcal{O}_{\mathbf{F}}$, thus guaranteeing the NVD property for $\mathbf{F} = \mathbb{Q}$ or $\mathbf{F}$ imaginary quadratic (cf. \cite{vehkalahti2}).

\subsection{Fast-Decodable Space--Time Codes}
\label{subsec:fd}

The use of multiple antennas for data exchange has many advantages, but also increases the complexity of the coding schemes; especially when decoding the received signal. The considered ST codes as in Def.~\ref{def:stc} allow for a decoding technique known as \emph{Maximum-Likelihood} (ML) decoding, which, given a ST code $\mathcal{X}$ and the transmission model \eqref{trans_model}, and recalling that the noise involved has zero mean, amounts to finding the codeword $X \in \mathcal{X}$ that minimizes 
\begin{equation}
\label{eqn:ml_decod}
\delta(X) := ||Y-HX||_F^2,
\end{equation}
where $||\cdot ||_F$ denotes the Frobenius norm. 

\begin{definition}\label{def:fd}
	The \emph{ML decoding complexity} of a rank-$k$ ST code $\mathcal{X}$ is defined as the minimum number of values that have to be computed for finding the solution to \eqref{eqn:ml_decod}. It is upper bounded by the worst-case (ML) complexity $|S|^k$ corresponding to an exhaustive search. 

A ST code $\mathcal{X}$ is said to be \emph{Fast-Decodable} (FD) if its worst-case ML decoding complexity is of the form\footnote{It is not sufficient to demand $k' < k$, as the ML decoding complexity of \emph{any} code can be reduced to $|S|^{k-2}$ due to Gram-Schmidt orthogonalization.} $|S|^{k'}$ for $k' < k-2$. 
\end{definition}

\begin{remark}
	In the rest of this article, when we say that a code has complexity $|S|^{k'}$, we mean the aforementioned worst-case complexity. By using, e.g., sphere-decoding the complexity can be of course reduced, both for FD and non-FD codes, but the search dimension will be determined by $|S|^{k'}$ as we need to jointly decode $k'$ symbols. Hence $|S|^{k'}$ gives us a way to compare complexities independently of the decoding method we would finally choose to use.
\end{remark}

Although algebraic ST codes can provide big diversity gains and offer high multiplexing gains, and therefore achieve the DMT of specific channels -- not only in theory (DMT is an asymptotic measure) but also in practice -- the major bottle-neck in effective implementation of algebraic ST-codes has traditionally been their (ML) decoding complexity. The aforementioned gains have thus been threatened to remain mainly theoretical, and the concept of fast-decodability was introduced in \cite{biglieri} in order to address the possibility for reducing the dimension of the (ML) decoding problem without having to resort to suboptimal decoding methods.

Let $H$ be the channel matrix and $\vect(\cdot): \Mat(m \times n,\mathbb{C}) \to \mathbb{R}^{2 m n}$ the map which stacks the columns of a matrix followed by separating the real and imaginary parts of the obtained vector components. For a set of weight matrices $\left\{B_i\right\}_{i=1}^k$, define $B :=  \left[\begin{smallmatrix}\vect(HB_1) & \ldots & \vect(HB_k)\end{smallmatrix}\right] \in \Mat(2T n_d\times k, \mathbb{R})$, so that every received codeword $HX$ can be represented as $Bs$ for a coefficient vector $s = (s_1,\ldots,s_k)^T \in S^k$. The problem of decoding now reads
\[
	\underset{X \in \mathcal{X}}{\argmin}\lbrace ||Y-HX||_F^2\rbrace \leadsto \underset{s \in S^k}{\argmin}\lbrace ||\vect(Y) - Bs||_E^2 \rbrace, 
\]
where $||\cdot ||_E$ denotes the Euclidean norm, 
and a real sphere-decoder can be employed to perform the latter search \cite{viterbo}. 
Applying $QR$-decomposition on $B$, where $Q$
$\in \Mat(2n_s n_d\times k,\mathbb{R})$ 
is an orthonormal matrix, and $R \in \Mat(k,\mathbb{R})$ is upper triangular, simplifies the above expression further to finding $s \in S^k$ that minimizes
$
	||\vect(Y)-QRs||_E^2 = ||Q^{\dagger}\vect(Y)-Rs||_E^2.
$

Introducing the $R$-matrix in the decoding process permits to directly read out the decoding complexity of a given code. To that end, we will use a specific quadratic form which has been introduced in \cite{jithamithra} as a tool for studying fast-decodability independently of the channel matrix.

\begin{definition}\label{def:hrqf}
The \emph{Hurwitz-Radon Quadratic Form} (HRQF) is the map
\[
		\mathcal{Q}: \mathcal{X} \to \mathbb{R}; \qquad
		X \mapsto \sum\limits_{1 \le i \le j \le k}{s_i s_j m_{ij}},
\]
where $m_{ij} := ||B_i B_j^{\dagger} + B_j B_i^{\dagger}||_{F}^2$ and $s_i \in S$.
Associating the matrix $M = (m_{ij})_{i,j}$ with $\mathcal{Q}$, the HRQF can be written as $\mathcal{Q}(X) = sMs^{T}$, where $s = \left[\begin{smallmatrix} s_1 & \ldots & s_k \end{smallmatrix}\right]$.
\end{definition}

\begin{remark}
\label{rmk:mutual}
	Note that $m_{ij} = ||B_i B_j^{\dagger} + B_j B_i^{\dagger}||_F^2 = 0$ if and only if $B_i B_j^{\dagger} + B_j B_i^{\dagger} = \mathbf{0}$, that is, if $B_i$ and $B_j$ are \emph{mutually orthogonal}. Moreover, premultiplication of the weight matrices by $H$ does not affect the zero structure of $M$, whereas it does affect that of $R$. Yet, the zero structure of the $R$ and $M$ matrices are conveniently related to each other (see Prop.~\ref{prop:cond_g_decod}, Cor.~\ref{cor:g_decod} or \cite{jithamithra}). 
\end{remark}

We further specify two important families of FD codes, together with an explicit decoding complexity expression. 

\begin{itemize}[leftmargin=5mm]
	\item[1.]Conditional $g$-group decodability:
\begin{definition}\label{def:cond_g_decod}
	A ST code $\mathcal{X}$ is \emph{conditionally $g$-Group Decodable} ($g$-GD) if there exists a partition of $\lbrace 1,\ldots,k \rbrace$ into $g+1$ nonempty subsets $\Gamma_1,\ldots,\Gamma_g,\Gamma^{\mathcal{X}}$, $g \ge 2$, such that $B_i B_j^{\dagger} + B_j B_i^{\dagger} = \mathbf{0}$ for $i \in \Gamma_u$, $j \in \Gamma_v$ and $1 \le u < v \le g$. 
\end{definition}

\begin{proposition}\label{prop:cond_g_decod}\cite[Thm.~2]{berhuy}
	There exists an ordering of the weight matrices\footnote{An algorithm for finding the optimal ordering is given in \cite{jithamithra}.} such that the $R$-matrix obtained for conditionally $g$-GD ST codes has the particular form
	\[
		R = \left[\begin{smallmatrix} D & N' \\ & N \end{smallmatrix}\right] = \left[\begin{smallmatrix} D_1 & & & N_1 \\ & \ddots & & \vdots \\ & & D_g & N_g \\ & & & N \end{smallmatrix}\right], 
	\]
	where $D$ is a $(k-|\Gamma^{\mathcal{X}}|)\times(k-|\Gamma^{\mathcal{X}}|)$ block-diagonal matrix whose blocks $D_i$ are of size $|\Gamma_i|\times|\Gamma_i|$, $N$ is a square upper-triangular $|\Gamma^{\mathcal{X}}|\times|\Gamma^{\mathcal{X}}|$ matrix, and $N'$ is a rectangular matrix. 
\end{proposition}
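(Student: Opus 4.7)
The plan is to translate the Hurwitz--Radon style mutual-orthogonality condition on pairs of weight matrices into a genuine real-orthogonality condition on the corresponding columns of $B$, and then to read off the shape of $R$ directly from Gram--Schmidt. The bridge I would establish first is that whenever $B_iB_j^{\dagger}+B_jB_i^{\dagger}=\mathbf{0}$, the columns $\vect(HB_i)$ and $\vect(HB_j)$ are orthogonal in $\mathbb{R}^{2Tn_d}$ for every channel realization $H$. This reduces to the short trace manipulation
\[
\langle\vect(HB_i),\vect(HB_j)\rangle
=\Re\,\mathrm{tr}\!\bigl(B_j^{\dagger}H^{\dagger}HB_i\bigr)
=\tfrac{1}{2}\mathrm{tr}\!\bigl(H^{\dagger}H\,(B_iB_j^{\dagger}+B_jB_i^{\dagger})\bigr),
\]
obtained from the cyclic property of the trace and averaging with the complex conjugate, which vanishes under the assumed condition. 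Hence, in the partition $\Gamma_1,\ldots,\Gamma_g,\Gamma^{\mathcal{X}}$ of Definition~\ref{def:cond_g_decod}, every column of $B$ indexed by $\Gamma_u$ is real-orthogonal to every column indexed by $\Gamma_v$ whenever $1\le u<v\le g$, irrespective of the realization of $H$.

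Next, I would reorder the weight matrices so that the indices in $\Gamma_1$ appear first (in any internal order), followed by those of $\Gamma_2$, and so on up to $\Gamma_g$, with $\Gamma^{\mathcal{X}}$ placed last, and then perform the QR decomposition $B=QR$ on the rearranged matrix. Writing $b_1,\ldots,b_k$ for the columns of $B$ and $q_1,\ldots,q_k$ for those of $Q$, we have $R_{\ell i}=\langle q_\ell,b_i\rangle$ for $\ell\le i$, while $q_\ell$ is a real linear combination of $b_1,\ldots,b_\ell$. Fix $\ell\in\Gamma_u$ and $i\in\Gamma_v$ with $u<v\le g$. Every $b_m$ with $m\le\ell$ belongs to some $\Gamma_{u'}$ with $u'\le u<v$, so the previous step yields $\langle b_m,b_i\rangle=0$, and therefore $\langle q_\ell,b_i\rangle=0$, i.e.\ $R_{\ell i}=0$. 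This zeroes out every block of $R$ linking two distinct groups among $\Gamma_1,\ldots,\Gamma_g$, so these groups produce the block-diagonal piece $D=\diag(D_1,\ldots,D_g)$ with each $D_i$ upper-triangular of size $|\Gamma_i|\times|\Gamma_i|$ (columns within a single $\Gamma_i$ are not assumed mutually orthogonal, hence their $D_i$ is a generic upper-triangular block).

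Finally, the columns indexed by $\Gamma^{\mathcal{X}}$ are not constrained to be orthogonal to anything. Upper-triangularity of $R$ then leaves a free rectangular block $N'$ of size $(k-|\Gamma^{\mathcal{X}}|)\times|\Gamma^{\mathcal{X}}|$ in the top-right position (naturally split into the sub-blocks $N_1,\ldots,N_g$ sitting above $D_1,\ldots,D_g$) together with an upper-triangular $|\Gamma^{\mathcal{X}}|\times|\Gamma^{\mathcal{X}}|$ block $N$ in the bottom-right corner, giving the claimed shape. The only delicate step in the whole argument is the trace identity in the first paragraph, which is what makes the pairwise orthogonality of weight matrices survive left multiplication by the random channel $H$; once that is in place, the rest is pure Gram--Schmidt bookkeeping made possible by the chosen ordering.
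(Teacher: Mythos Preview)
Your argument is correct and is in fact the standard route to this result: the key trace identity
\[
\langle \vect(HB_i),\vect(HB_j)\rangle=\tfrac{1}{2}\,\mathrm{tr}\!\bigl(H^{\dagger}H\,(B_iB_j^{\dagger}+B_jB_i^{\dagger})\bigr)
\]
is exactly what transfers the mutual-orthogonality condition to real orthogonality of the columns of $B$ for every channel realization, and the remaining Gram--Schmidt bookkeeping with the chosen ordering is done carefully.

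Note, however, that the paper does not actually give its own proof of this proposition; it is quoted verbatim from \cite[Thm.~2]{berhuy} (and the channel-independent link between the HRQF zero structure and the $R$-matrix is attributed to \cite{jithamithra}). So there is no in-paper proof to compare against. What you have written is essentially the argument underlying those references, so you may regard your proof as matching the intended one.
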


\begin{remark}
	We intentionally refer to \cite{berhuy} for this result although the authors do not use the term \emph{conditional $g$-group decodable}. However, Def.~\ref{def:cond_g_decod} above coincides with the codes considered in the referred source. A similar observation was also already made in \cite{srinath}.
\end{remark}

	\item[2.]$g$-group decodability:
\begin{definition}\label{def:g_decod}
	$\mathcal{X}$ is \emph{$g$-group decodable} if there exists a partition of $\lbrace 1,\ldots,k\rbrace$ into $g$ nonempty subsets $\Gamma_1,\ldots,\Gamma_g$ such that $B_i B_j^{\dagger} + B_j B_i^{\dagger} = \mathbf{0}$ for $i \in \Gamma_u$, $j \in \Gamma_v$, and $u \neq v$. 
\end{definition}

\begin{remark}
A code $\mathcal{X}$ is $g$-GD if it is conditionally $g$-GD and its associated set $\Gamma^{\mathcal{X}}$ is empty.   
\end{remark}

From this remark and Prop.~\ref{prop:cond_g_decod} we immediately get the following corollary. 

\begin{corollary}\label{cor:g_decod}
	There exists an ordering of the weight matrices such that the $R$-matrix obtained for $g$-GD ST codes is of the form 
	\[
		R = \left[\begin{smallmatrix} D_1 & & \\ & \ddots & \\ & & D_g \end{smallmatrix}\right],
	\]
	where $D_i$ is a $|\Gamma_i|\times|\Gamma_i|$ upper-triangular matrix.
\end{corollary}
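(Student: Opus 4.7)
The plan is to derive the corollary directly from Proposition~\ref{prop:cond_g_decod} by exploiting the preceding remark, which observes that a $g$-group decodable code is exactly a conditionally $g$-group decodable code whose exceptional index set $\Gamma^{\mathcal{X}}$ is empty.

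More concretely, I would first recall that if $\mathcal{X}$ is $g$-GD with partition $\{\Gamma_1,\ldots,\Gamma_g\}$ of $\{1,\ldots,k\}$ satisfying $B_iB_j^\dagger+B_jB_i^\dagger=\mathbf{0}$ whenever $i\in\Gamma_u$, $j\in\Gamma_v$ with $u\neq v$, then setting $\Gamma^{\mathcal{X}}:=\emptyset$ produces a valid partition in the sense of Definition~\ref{def:cond_g_decod}. The mutual orthogonality condition across the $g$ groups is inherited verbatim, and the (vacuous) conditions involving $\Gamma^{\mathcal{X}}$ hold trivially. Thus $\mathcal{X}$ is in particular conditionally $g$-GD.

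Next, I would apply Proposition~\ref{prop:cond_g_decod} to obtain an ordering of the weight matrices under which the $R$-matrix from the $QR$-decomposition takes the block form displayed there. Since $|\Gamma^{\mathcal{X}}|=0$, the rectangular block $N'$ (comprising the columns $N_1,\ldots,N_g$) and the square upper-triangular block $N$ both have width zero and disappear. The leftover structure is exactly the block diagonal $\diag(D_1,\ldots,D_g)$ claimed in the statement, with each $D_i$ of size $|\Gamma_i|\times|\Gamma_i|$. Upper-triangularity of each $D_i$ follows because $R$ is globally upper triangular by construction of the $QR$-decomposition, and the diagonal blocks of an upper-triangular matrix are themselves upper triangular.

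No genuine obstacle is expected; the only point demanding a line of justification is the specialization $\Gamma^{\mathcal{X}}=\emptyset$, which removes the $N'$ and $N$ parts of Proposition~\ref{prop:cond_g_decod}. All remaining structural properties of the $D_i$ are inherited directly from the proposition, so the corollary is an immediate specialization rather than an independent argument.
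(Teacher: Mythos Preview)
Your proposal is correct and matches the paper's approach exactly: the paper simply states that the corollary follows immediately from the preceding remark (that $g$-GD is conditionally $g$-GD with $\Gamma^{\mathcal{X}}=\emptyset$) together with Proposition~\ref{prop:cond_g_decod}, which is precisely the specialization you carry out. The only minor quibble is that Definition~\ref{def:cond_g_decod} formally requires $\Gamma^{\mathcal{X}}$ to be nonempty, so your claim that $\Gamma^{\mathcal{X}}=\emptyset$ gives a ``valid partition in the sense of Definition~\ref{def:cond_g_decod}'' is a slight abuse---but the paper makes the identical move in its Remark, so this is not a defect of your argument.
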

\end{itemize}

By Def.~\ref{def:fd}, both conditionally $g$-GD and $g$-GD ST codes are FD. The latter definitions, however, allow to deduce the exact decoding complexity reduction with the help of the HRQF. For this purpose, note that having a (conditionally) $g$-GD ST code, decoding the last $|\Gamma^{\mathcal{X}}| \ge 0$ variables gives a complexity of $|S|^{|\Gamma^{\mathcal{X}}|}$. The remaining variables can be decoded in $g$ parallel steps, where step $i$ involves $|\Gamma_i|$ variables. This observation leads to the following result. 

\begin{proposition}\label{prop:complexity}
	Given a (conditionally) $g$-GD ST code $\mathcal{X}$ with possibly empty subset $\Gamma^{\mathcal{X}}$, the decoding complexity of $\mathcal{X}$ is 
	\[
		|S|^{|\Gamma^{\mathcal{X}}| + \max\limits_{1 \le i \le g}{|\Gamma_i|}}.
	\]
\end{proposition}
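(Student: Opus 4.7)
The plan is to read off the decoding complexity directly from the block structure of the $R$-matrix supplied by Proposition~\ref{prop:cond_g_decod}, exploiting that the back-substitution phase of sphere decoding decouples exactly along the partition $\Gamma_1,\ldots,\Gamma_g,\Gamma^{\mathcal{X}}$.

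First, I would recall that ML decoding reduces to
\[
\underset{s\in S^k}{\argmin}\,\|Q^{\dagger}\vect(Y)-Rs\|_E^2,
\]
and that by Proposition~\ref{prop:cond_g_decod} (after reordering the weight matrices) the matrix $R$ takes the block form
\[
R=\begin{bmatrix} D_1 & & & N_1 \\ & \ddots & & \vdots \\ & & D_g & N_g \\ & & & N \end{bmatrix},
\]
with $N$ upper-triangular of size $|\Gamma^{\mathcal{X}}|$ and each $D_i$ upper-triangular of size $|\Gamma_i|$. Partition the coefficient vector conformably as $s=(s^{(1)},\ldots,s^{(g)},s^{(\mathcal{X})})$, with $s^{(i)}\in S^{|\Gamma_i|}$ and $s^{(\mathcal{X})}\in S^{|\Gamma^{\mathcal{X}}|}$, and partition $z:=Q^{\dagger}\vect(Y)$ analogously as $(z^{(1)},\ldots,z^{(g)},z^{(\mathcal{X})})$.

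Next I would expand the squared Euclidean norm to obtain
\[
\|z-Rs\|_E^2=\bigl\|z^{(\mathcal{X})}-Ns^{(\mathcal{X})}\bigr\|_E^2+\sum_{i=1}^{g}\bigl\|z^{(i)}-D_i s^{(i)}-N_i s^{(\mathcal{X})}\bigr\|_E^2.
\]
Conditioning on a fixed value of $s^{(\mathcal{X})}$, the first summand becomes a constant, and the remaining sum splits as a disjoint sum of $g$ terms, each depending only on its own block $s^{(i)}$. Therefore the conditional minimization over $(s^{(1)},\ldots,s^{(g)})$ decouples into $g$ independent subproblems
\[
\underset{s^{(i)}\in S^{|\Gamma_i|}}{\argmin}\,\bigl\|(z^{(i)}-N_i s^{(\mathcal{X})})-D_i s^{(i)}\bigr\|_E^2,\qquad i=1,\ldots,g,
\]
each of worst-case complexity $|S|^{|\Gamma_i|}$. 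Since the $g$ subproblems are independent, they can be executed in parallel, so the worst-case effort per choice of $s^{(\mathcal{X})}$ is $\max_{1\le i\le g}|S|^{|\Gamma_i|}$.

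Finally, I would note that the outer loop enumerates $s^{(\mathcal{X})}\in S^{|\Gamma^{\mathcal{X}}|}$, which is the minimum possible because the $|\Gamma^{\mathcal{X}}|$ entries of $s^{(\mathcal{X})}$ couple all $g$ blocks and cannot be decoded independently from them. Multiplying the outer count by the parallel inner cost yields the claimed worst-case complexity
\[
|S|^{|\Gamma^{\mathcal{X}}|}\cdot\max_{1\le i\le g}|S|^{|\Gamma_i|}=|S|^{|\Gamma^{\mathcal{X}}|+\max_{1\le i\le g}|\Gamma_i|}.
\]
The case $\Gamma^{\mathcal{X}}=\emptyset$ recovers Corollary~\ref{cor:g_decod} directly and reduces the expression to $\max_i|S|^{|\Gamma_i|}$. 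The only delicate point, which I would state explicitly rather than argue at length, is the convention that the complexity figure refers to the dimension of the parallelized search and hence scales as the maximum (not the sum) over the decoupled groups; this is the standard interpretation used throughout the fast-decodability literature cited in Remark~\ref{rmk:mutual}.
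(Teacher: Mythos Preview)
Your argument is correct and follows exactly the same idea as the paper. The paper does not give a formal proof of this proposition; it merely states in the preceding paragraph that decoding the last $|\Gamma^{\mathcal{X}}|$ variables costs $|S|^{|\Gamma^{\mathcal{X}}|}$ and that the remaining variables split into $g$ parallel steps of sizes $|\Gamma_i|$, and then records the proposition as the resulting observation. Your write-up is a faithful and more detailed expansion of precisely this reasoning, making explicit the block decomposition of $\|z-Rs\|_E^2$ that the paper leaves implicit.
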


\subsection{Iterative Construction from Cyclic Algebras}
\label{subsec:iterated_cons}

Crucial for ST codes to exhibit desirable properties is the underlying algebraic framework. Constructing codes for larger number of antennas means dealing with higher degree field extensions and algebras, which are harder to handle. We briefly recall an iterative ST code construction, recently proposed in \cite{markin}, which, starting with an $n\times n$ ST code, results in a new $2n\times 2n$ ST code with the same rate and double rank. The advantage of this construction is that when applied carefully, the resulting codes inherit good properties from the original ST codes. 

\begin{definition}\label{def:iterated_map}
Let $\mathbf{F}$ be a finite Galois extension of $\mathbb{Q}$ and $\mathcal{C} = (\mathbf{K}/\mathbf{F},\sigma,\gamma)$ be a CDA of degree $n$. Fix $\theta \in \mathcal{C}$ and $\tau \in \Aut_{\mathbb{Q}}(\mathbf{K})$, \emph{i.e.}, a $\mathbb{Q}$-automorphism of $\mathbf{K}$. 
\begin{itemize}
	\item[(a)] Define the function
	\[
		\begin{split}
			\alpha_{\tau,\theta}: \Mat(n,\mathbf{K})\times\Mat(n,\mathbf{K}) &\to \Mat(2n,\mathbf{K}) \\
			(X,Y) &\mapsto \left[\begin{smallmatrix} X & \theta\tau(Y) \\ Y & \tau(X) \end{smallmatrix}\right].
		\end{split}
	\]
	
	\item[(b)] If $\theta = \zeta\theta'$ is totally real or totally imaginary, define the alike function	
	\[
		\begin{split}
			\tilde{\alpha}_{\tau,\theta}: \Mat(n,\mathbf{K})\times\Mat(n,\mathbf{K}) &\to \Mat(2n,\mathbf{K}) \\
			(X,Y) &\mapsto \left[\begin{smallmatrix} X & \zeta\sqrt{\theta'}\tau(Y) \\ \sqrt{\theta'}Y & \tau(X) \end{smallmatrix}\right].
		\end{split}	
	\]
\end{itemize}	
\end{definition}

Suppose that $\mathcal{C}$ gives rise to a ST code $\mathcal{X}$ of rank $k$ defined via matrices $\lbrace B_i \rbrace_{i=1}^{k}$. Then, the matrices $\left\{ \alpha_{\tau,\theta}(B_i,0), \alpha_{\tau,\theta}(0,B_i)\right\}_{i=1}^{k}$ (or applying $\tilde{\alpha}_{\tau,\theta}(\cdot,\cdot)$, respectively) define a rank-$2k$ code 
\[
	\mathcal{X}_{\text{it}} = \Biggl\lbrace \sum\limits_{i=1}^{k}\left[s_i\alpha_{\tau,\theta}(B_i,0) + s_{k+i}\alpha_{\tau,\theta}(0,B_i)\right] \Biggm\vert s_i \in S \Biggr\rbrace. 	
\] 

\begin{proposition}\label{prop:iterated_construction} \cite[Thm.~1, Thm.~2]{markin}
	Let $\mathcal{C} = (\mathbf{K}/\mathbf{F}, \sigma, \gamma)$ be a CDA giving rise to a ST code $\mathcal{X}$ defined by the matrices $\lbrace B_i \rbrace_{i=1}^{k}$. Assume that $\tau \in \Aut_{\mathbb{Q}}(\mathbf{K})$ commutes with $\sigma$ and complex conjugation, and further $\tau(\gamma) = \gamma$, $\tau^2 = \id$. Fix $\theta \in \mathbf{F}^{\langle \tau \rangle}$, where $\mathbf{F}^{\langle \tau \rangle}$ is the subfield of $\mathbf{F}$ fixed by $\tau$. Identifying an element of $\mathcal{C}$ with its matrix representation (cf. \eqref{eqn:left_reg}), we have:
	\begin{itemize}
		\item[(i)] The image $\mathcal{I} = \alpha_{\tau,\theta}(\mathcal{C},\mathcal{C})$ is an algebra and is division if and only if $\theta \neq z\tau(z)$ for all $z \in \mathcal{C}$. Moreover, for any $\alpha_{\tau,\theta}(x,y) \in \mathcal{I}$, we have $\det(\alpha_{\tau,\theta}(x,y)) \in \mathbf{F}^{\langle \tau \rangle}$. 
		
		\item[(ii)] If in addition $\theta = \zeta\theta'$ is totally real or totally imaginary, the image $\tilde{\mathcal{I}} = \tilde{\alpha}_{\theta}(\mathcal{C},\mathcal{C})$ retains both the full-diversity and NVD property. If for some $i,j$, $B_i B_j^{\dagger} + B_j B_i^{\dagger} = \mathbf{0}$, we have
		\[
			\begin{split}
				\tilde{\alpha}_{\tau,\theta}(B_i,0)\tilde{\alpha}_{\tau,\theta}(B_j,0)^{\dagger} + \tilde{\alpha}_{\tau,\theta}(B_j,0)\tilde{\alpha}_{\tau,\theta}(B_i,0)^{\dagger} &= \mathbf{0}\,, \\
				\tilde{\alpha}_{\tau,\theta}(0,B_i)\tilde{\alpha}_{\tau,\theta}(0,B_j)^{\dagger} + \tilde{\alpha}_{\tau,\theta}(0,B_j)\tilde{\alpha}_{\tau,\theta}(0,B_i)^{\dagger} &= \mathbf{0}.
			\end{split}
		\]
	\end{itemize}
\end{proposition}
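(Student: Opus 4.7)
The plan is to recognize the image $\mathcal{I}$ of $\alpha_{\tau,\theta}$ as a generalized cyclic algebra of degree $2$ built on top of $\mathcal{C}$. Writing $\mathcal{D} := \{\diag(X,\tau(X)) : X\in\mathcal{C}\}\cong\mathcal{C}$ for the image of $\alpha_{\tau,\theta}(\cdot,0)$ and $e := \alpha_{\tau,\theta}(0,I_n)$, a direct block computation gives $e^2 = \theta\,I_{2n}$ and $e\cdot\diag(X,\tau(X)) = \diag(\tau(X),X)\cdot e$, so $\mathcal{I} = \mathcal{D}\oplus e\mathcal{D}$ has the shape of a cyclic algebra with parameters $(\mathcal{C},\tau,\theta)$. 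Closure under multiplication then reduces to a block-by-block calculation: $\alpha_{\tau,\theta}(X_1,Y_1)\alpha_{\tau,\theta}(X_2,Y_2)$ collects into $\alpha_{\tau,\theta}(X_3,Y_3)$ with $X_3 = X_1X_2 + \theta\tau(Y_1)Y_2$ and $Y_3 = Y_1X_2 + \tau(X_1)Y_2$, the hypotheses $\tau(\theta)=\theta$, $\tau^2=\id$, and the centrality of $\theta$ in $\mathcal{C}$ being exactly what is needed to make the $(1,2)$- and $(2,2)$-blocks of the product match the required form.

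For the division criterion in (i), I invoke the Albert-type test for cyclic-type algebras of degree $2$: $\mathcal{I}$ fails to be division if and only if $\theta = z\tau(z)$ for some $z\in\mathcal{C}^{\times}$. The forward direction is the explicit zero-divisor identity $(e-z)(e+\tau(z)) = e^2 + e\tau(z) - ze - z\tau(z) = \theta + ze - ze - \theta = 0$, using $e\tau(z) = \tau^2(z)e = ze$, with both factors nonzero since $e\notin\mathcal{C}$. The reverse direction follows by the standard cyclic-algebra inversion argument. For the determinant claim, apply $\tau$ entrywise to $\alpha_{\tau,\theta}(X,Y)$ to obtain $\left[\begin{smallmatrix}\tau(X)&\theta Y\\\tau(Y)&X\end{smallmatrix}\right]$; a direct computation shows this equals $DP\,\alpha_{\tau,\theta}(X,Y)\,PD^{-1}$, where $P$ swaps the two $n$-blocks and $D=\diag(I_n,\theta^{-1}I_n)$. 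Since conjugation preserves determinants, $\tau(\det\alpha_{\tau,\theta}(X,Y)) = \det\alpha_{\tau,\theta}(X,Y)$; combined with the standard fact that the reduced norm of an element of $\mathcal{I}$ lies a priori in its center, this pins the determinant down to $\mathbf{F}^{\langle\tau\rangle}$.

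Part (ii) hinges on the single observation that $\tilde\alpha_{\tau,\theta}$ is \emph{conjugate} to $\alpha_{\tau,\theta}$. Writing $s := \sqrt{\theta'}$, which may be chosen real since $\theta'$ is totally real positive (so $\bar s = s$), one verifies by direct block multiplication that
\[
\tilde\alpha_{\tau,\theta}(X,Y) = \diag(I_n, s\,I_n)\cdot\alpha_{\tau,\theta}(X,Y)\cdot\diag(I_n, s^{-1}I_n).
\]
Consequently $\det\tilde\alpha_{\tau,\theta}(X,Y) = \det\alpha_{\tau,\theta}(X,Y)$ for every $(X,Y)$, and by $\mathbb{Z}$-linearity the same equality holds on differences of codewords; hence $\tilde{\mathcal{I}}$ inherits both full-diversity and NVD from $\mathcal{I}$ via part (i).

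It remains to verify orthogonality preservation. Since $\tilde\alpha_{\tau,\theta}(B_i,0)=\diag(B_i,\tau(B_i))$ is block-diagonal, the sum $\tilde\alpha_{\tau,\theta}(B_i,0)\tilde\alpha_{\tau,\theta}(B_j,0)^{\dagger}+\tilde\alpha_{\tau,\theta}(B_j,0)\tilde\alpha_{\tau,\theta}(B_i,0)^{\dagger}$ is block-diagonal with blocks $B_iB_j^{\dagger}+B_jB_i^{\dagger}$ and $\tau(B_iB_j^{\dagger}+B_jB_i^{\dagger})$, where one uses $\tau(B)^{\dagger}=\tau(B^{\dagger})$, which follows from $\tau$ commuting with complex conjugation. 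Both blocks vanish by hypothesis. The $\tilde\alpha_{\tau,\theta}(0,B_i)$ case is analogous: these matrices are block-antidiagonal, so the Hermitian products are block-diagonal with blocks proportional to $\tau(B_iB_j^{\dagger})$ and $B_iB_j^{\dagger}$ — the scalar factors $|\zeta|^{2}\theta'$ and $\theta'$ come out cleanly because $s\in\mathbb{R}$ — and the sum over the swap again vanishes. The main obstacle is part (i): recognizing $\mathcal{I}$ cleanly as a cyclic-algebra-like object so that the division criterion and the centrality of the reduced norm can be invoked rather than re-derived by hand; once this identification is in place, part (ii) reduces to the careful bookkeeping just outlined.
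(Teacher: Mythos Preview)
The paper does not give a proof of this proposition; it is quoted verbatim from \cite[Thm.~1, Thm.~2]{markin} and used as a black box. So there is no ``paper's own proof'' to compare against. That said, your argument is correct and is essentially the approach of the cited source: identify $\mathcal{I}=\mathcal{D}\oplus e\mathcal{D}$ with $e^2=\theta$ and $eX=\tau(X)e$ as a degree-$2$ cyclic-type extension of $\mathcal{C}$, then read off the Albert criterion and the reduced-norm statement. The conjugation $\tilde{\alpha}_{\tau,\theta}=\diag(I_n,\sqrt{\theta'}I_n)\,\alpha_{\tau,\theta}\,\diag(I_n,\sqrt{\theta'}I_n)^{-1}$ is exactly the right device for (ii), and your block computations for the orthogonality relations are clean.

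Two minor points worth tightening. First, in (i) you establish $\tau$-invariance of $\det\alpha_{\tau,\theta}(X,Y)$ by conjugation and then appeal to ``reduced norm lies in the center'' to land in $\mathbf{F}^{\langle\tau\rangle}$; this is fine, but note that it requires knowing the center of $\mathcal{I}$ is $\mathbf{F}^{\langle\tau\rangle}$ and that the $2n\times 2n$ representation over $\mathbf{K}$ computes the reduced norm (both true since $\mathbf{K}$ is a maximal subfield of $\mathcal{I}$). Alternatively one can show $\sigma$-invariance of the determinant by an analogous conjugation argument and avoid invoking reduced norms. Second, the reverse direction of the division criterion (``standard cyclic-algebra inversion argument'') deserves one more sentence: if $(A+eB)(C+eD)=0$ with both factors nonzero, eliminate to get $\theta=z\tau(z)$ for a suitable $z\in\mathcal{C}^{\times}$, using that $\mathcal{C}$ is division.
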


\section{Fast-Decodable Space--Time Codes for Distributed Communications}
\label{sec:distributed}

We consider the communication of $N+1$ users with a single destination, where every user as well as the destination can be equipped with multiple antennas. In this scenario, enabling cooperation permits that the active transmitter be assisted by the other $N$ users in the transmission of its data. We start by introducing the assumed channel model in detail.  

\subsection{MIMO Relay Channel}
\label{subsec:relay_channel}

In the following, we consider $N+1$ users communicating to a single destination over a wireless network. Each user is allocated a time slot for the transmission of its data, that is the channel is shared in a time-division multiple-access manner. Within a fixed time slot, the remaining $N$ users act as intermediate relays, helping the active transmitter in the communication process by amplifying and forwarding the received signal. Considering a single time slot, the channel resembles a single user channel with $N$ relays and a single destination, as illustrated in Figure~\ref{fig:systemfig}. 
\begin{figure}[h]
\begin{small}
\centering
\begin{minipage}[b]{0.38\textwidth}
\begin{tikzpicture}[node distance=1cm]
 \node[relay] (r1) {\scriptsize Relay 1};
 \node[below=0.15cm of r1] (vert) {\scriptsize $\textbf \vdots$};
 \node[relay, below=0.15cm of vert] (rN) {\scriptsize Relay N};
 \node[above=0.5cm of r1] (dummy) {};
 \node[user, right=1.75cm of dummy] (dest) {\scriptsize Dest.}
   edge[pil_rev] node[pos=0.75, above]{\tiny $G_1$} (r1.east) 
   edge[pil_rev] node[pos=0.65, left]{\tiny $G_{N}$} (rN.east); 
 \node[user, left=1.75cm of dummy] (source) {\scriptsize Source}
   edge[pil] node[above]{\scriptsize $F$} (dest.west)       
   edge[pil_dash] node[pos=0.75, above]{\tiny $H_1$} (r1.west)
   edge[pil_dash] node[pos=0.65, right]{\tiny $H_{N}$} (rN.west);
  \node[below=0.75cm of rN] (dummy2) {};
\end{tikzpicture}
\caption{System model for the $N$-relay channel with a single destination.}
\label{fig:systemfig}
\end{minipage}\hfill
\begin{minipage}[b]{0.6\textwidth}
\begin{tikzpicture}
[blanknode/.style={rectangle, draw=white, minimum width=10mm},
solidnode/.style={rectangle, draw=black, minimum width=10mm, minimum height=2mm},
dashednode/.style={rectangle, draw=black, dashed, minimum width=10mm, minimum height=2mm},]

\node[blanknode]		(source)	                    		{Source};
\node[blanknode]		(r1)		[below=0.1cm of source]	{$R_1$};
\node[blanknode]    	(r2)    	[below=0.1cm of r1] 		{$R_2$};
\node[blanknode]    	(dummyr)	[below=0.1cm of r2] 		{$\vdots$};
\node[blanknode]    	(rN)    	[below=0.1cm of dummyr] 	{$R_N$};
\node[blanknode]		(dest)	[below=0.1cm of rN]		{Dest.};

\node[solidnode]		(x11)	[right=0.2cm of source]	{\footnotesize{$X_{1,1}$}};
\node[solidnode]		(x12)	[right=0.05cm of x11]	{\footnotesize{$X_{1,2}$}};
\node[solidnode]		(x21)	[right=0.05cm of x12]	{\footnotesize{$X_{2,1}$}};
\node[solidnode]		(x22)	[right=0.05cm of x21]	{\footnotesize{$X_{2,2}$}};
\node[blanknode]		(dummyt)[right=0.05cm of x22]	{\footnotesize{$\cdots$}};
\node[solidnode]		(xN1)	[right=0.05cm of dummyt]	{\footnotesize{$X_{N,1}$}};
\node[solidnode]		(xN2)	[right=0.05cm of xN1]	{\footnotesize{$X_{N,2}$}};

\node[dashednode]	(r11)	[below=0.05cm of x11]	{\footnotesize{$X_{1,1}$}};
\node[solidnode]		(r12)	[below=0.05cm of x12]	{\footnotesize{$X_{1,1}$}};

\node[dashednode] 	(r21)	[below=0.65cm of x21]	{\footnotesize{$X_{2,1}$}};
\node[solidnode] 	(r22)	[below=0.65cm of x22]	{\footnotesize{$X_{2,1}$}};

\node[blanknode]		(dummyd)[below=1.25cm of dummyt]	{\footnotesize{$\ddots$}};

\node[dashednode]	(rN1)	[below=2.15cm of xN1]	{\footnotesize{$X_{N,1}$}};
\node[solidnode]		(rN2)	[below=2.15cm of xN2]	{\footnotesize{$X_{N,1}$}};

\node[dashednode]	(d11) 	[below=2.8cm of x11]	{\footnotesize{$Y_{1,1}$}}
	edge node[pos=0.5, below=0.3cm]{$0$} (d11.west); 
\node[dashednode]	(d12) 	[below=2.8cm of x12]	{\footnotesize{$Y_{1,2}$}}
	edge node[pos=0.5, below=0.3cm]{$\frac{T}{2}$} (d11.east); 
\node[dashednode]	(d21) 	[below=2.8cm of x21]	{\footnotesize{$Y_{2,1}$}}
	edge node[pos=0.5, below=0.3cm]{$T$} (d12.east); 
\node[dashednode]	(d22) 	[below=2.8cm of x22]	{\footnotesize{$Y_{2,2}$}}
	edge node[pos=0.5, below=0.3cm]{$\frac{3T}{2}$} (d21.east)
	edge node[pos=0.5, below=0.3cm]{$2T$} (d22.east); 
\node[blanknode]		(dummyc)	[below=2.8cm of dummyt]	{\footnotesize{$\cdots$}};
\node[dashednode]	(dN1) 	[below=2.8cm of xN1]	{\footnotesize{$Y_{N,1}$}};
\node[dashednode]	(dN2) 	[below=2.8cm of xN2]	{\footnotesize{$Y_{N,2}$}}
	edge node[pos=0.5, below=0.3cm]{$NT$} (dN2.east); 
\end{tikzpicture}
\caption{Superframe structure for the $N$-relay NAF channel. Transmitted and received signals are represented by solid and dashed boxes, respectively.}
\label{fig:frame}
\end{minipage}
\end{small}
\end{figure}
The matrices $F$, $H_i$ and $G_i$, $1 \le i \le N$ denote the Rayleigh distributed channels from the source to the destination, relays, and from the relays to the destination, respectively. 

Henceforth, we will assume the Nonorthogonal Amplify-and-Forward (NAF) scheme introduced in \cite{nabar} and generalized in \cite{yang} to the MIMO case, where, in contrast to the orthogonal schemes, the active transmitter and helping relay can transmit at the same time. In addition, we assume the \emph{half-duplex} constraint, \emph{i.e.}, the relays can only receive or transmit a signal at a given time instance. 
For a fixed time slot, let us define a superframe consisting of $N$ consecutive cooperation frames, during which the relays take turns to cooperate with the active transmitter in their respective cooperation frame. Each frame of length $T$ is composed of two partitions of $T/2$ symbols, and all channels are assumed to be static during the transmission of the entire superframe. This frame model is depicted in Figure~\ref{fig:frame}. 
Denote by $n_s$, $n_d$ and $n_r$ the number of antennas at the source, destination, and each of the $N$ relays, respectively. In the following, we assume $n_r \le n_s$. In the case $n_r > n_s$, the relays can do better than simply forwarding the received signal, and we refer to \cite{yang} for a brief discussion of the possible strategies. 
As illustrated in Figure~\ref{fig:systemfig} and \ref{fig:frame}, the transmission process can be modeled as
\begin{align*}
		Y_{i,1} &= \sqrt{\pi_1\SNR} F X_{i,1} + V_{i,1}\,, &i=1,\ldots,N \\
		Y_{i,2} &= \sqrt{\pi_2\SNR} F X_{i,2} + V_{i,2} + \sqrt{\pi_3\SNR	} G_i B_i(\sqrt{\pi_1\rho\SNR} H_i X_{i,1} + W_i)\,, &i=1,\ldots,N
\end{align*}
where $Y_{i,j}$, $X_{i,j}$ are the received and transmitted matrices, $V_{i,j}$, $W_i$ represent additive white Gaussian noise, the matrices $B_i$ are used for normalization and $\pi_i$ are channel-independent, chosen so that $\SNR$ denotes the received Signal-to-Noise Ratio per antenna at the destination. The ratio between the path loss of the source-relay and source-destination links is denoted by $\rho$.  
 
From the destination's point of view, the above transmission model can equivalently be presented as a virtual single-user MIMO channel modeled as 
\[
	Y_{n_d\times n} = H_{n_d\times n} X_{n\times n} + V_{n_d\times n},
\]
where $n=N(n_s+n_r)$, $X$ and $Y$ are the (overall) transmitted and received signals, and the structure of the channel matrix $H$ is determined by the different relay paths. Thus this virtual antenna array created by allowing cooperation can be used to exploit spatial diversity even when a local antenna array may not be available. We have also made the assumption that $T=n$.  

It was shown in \cite{yang} that given a rate-$4n_s$ (cf. Def. \ref{def:rate}) block-diagonal ST code $\mathcal{X}$, that is where each $X \in \mathcal{X}$ takes the form 
\[
	X = \diag\lbrace\Xi_i\rbrace_{i=1}^{N} = \left[\begin{smallmatrix} \Xi_1 & & \\ & \ddots & \\ & & \Xi_N \end{smallmatrix}\right]
\] 
with $\Xi_i \in \Mat(2n_s, \mathbb{C})$ and such that $\mathcal{X}$ is NVD, the equivalent code 
\[
\begin{split}
	C = \left[\begin{smallmatrix} C_1 & \cdots & C_{N-1}\end{smallmatrix}\right], \text{ where } 
	C_i = \left[\begin{smallmatrix} \Xi_i\left[1:n_s, 1:2n_s\right] & \Xi_i\left[n_s+1:2n_s,1:2n_s\right]\end{smallmatrix}\right] 
\end{split}
\]
achieves the optimal DMT for the channel\footnote{DMT is an asymptotic performance measure, indicating a tradeoff between the transmission rate and decoding error probability as a function of SNR. Here, our primary goal is fast-decodability with NVD, so instead of giving a detailed definition of the DMT, we refer the reader to \cite{zheng},\cite{yang}.}, transmitting $C_i$ in the $i^{\text{th}}$ cooperation frame.  

It would thus be desirable to have ST codes of this block-diagonal form which achieve:
\begin{enumerate}
	\item Full rate $2n_d$, that is, the number of independent real symbols (\emph{e.g.}, Pulse Amplitude Modulation (PAM)) per codeword equals $2n_dN(n_s+n_r)$. 
	
	\item Full rank $N(n_s+n_r)$.
	
	\item NVD.
\end{enumerate}

In addition, the aim of this article is to show how such ST codes can be constructed which moreover are FD, thus exhibit a reduced complexity in decoding without sacrificing performance.

\subsection{Fast-Decodable Distributed Space--Time Codes}
\label{subsec:relay_codes}

In the following, we present the main results of this article, that is we introduce flexible methods for constructing FD ST codes for the Single-Input Multiple-Output (SIMO) and MIMO-NAF channel for $N \ge 1$ relays. The following function is crucial for the proposed constructions.
\begin{definition}\label{def:coop_construction}
Consider an $N$-relay NAF channel. Given a ST code $\mathcal{X} \subset \Mat(n_s+n_r,\mathbb{C})$ and a suitable function $\eta$ of order $N$ (\emph{i.e.}, $\eta^N(X)=X$), define the function
\[
		\Psi_{\eta,N}: \mathcal{X} \to \Mat(nN,\mathbb{C}); \quad
		X \mapsto \diag\lbrace \eta^i(X)\rbrace_{i=0}^{N-1} = \left[\begin{smallmatrix} X & & \\ & \ddots & \\ & & \eta^{N-1}(X) \end{smallmatrix}\right].
\]
\end{definition}

For the remaining of this section, we use techniques from Algebraic Number Theory to prove certain properties about the structures used for code construction. An interested reader is referred to \cite{marcus} as a good source to review these techniques.
We denote by $\imath := \sqrt{-1}$ the complex unit. 

\subsubsection{\underline{SIMO}}
\label{ssubsec:single}

Assume a cooperative communications scenario as illustrated in Figure~\ref{fig:systemfig}, where $n_s = n_r = 1$, $n_d \ge 2$. Consider the tower of extensions depicted in Figure~\ref{fig:siso_tower}, where $\xi$ is taken to be totally real, $m \in \mathbb{Z}_{\ge 1}$ and $a \in \mathbb{Z}\backslash\left\{0\right\}$ are square-free.
\begin{figure}[h]
\begin{small}
\centering
\begin{tikzpicture}[node distance=1cm]
 \node (alg) {$\mathcal{C} = (a,\gamma)_{\mathbf{K}} \cong (\mathbf{L}/\mathbf{K},\sigma:\sqrt{a} \mapsto -\sqrt{a},\gamma)$};
 \node[below=0.5cm of alg] (L) {$\mathbf{L} = \mathbf{K}(\sqrt{a})$}
 	edge[-] node[pos=0.5, right]{\scriptsize 2} (alg);
 \node[below=0.5cm of L] (K) {$\mathbf{K} = \mathbf{F}(\xi)$}
 	edge[-] node[pos=0.5, right]{\scriptsize 2} (L);
 \node[below=0.5cm of K] (F) {$\mathbf{F} = \mathbb{Q}(\sqrt{-m})$}
 	edge[-] node[pos=0.5, right]{\scriptsize N} (K);
 \node[below=0.5cm of F] (Q) {$\mathbb{Q}$}
 	edge[-] node[pos=0.5, right]{\scriptsize 2} (F);
 \node[left=0.5cm of F] (Qi) {$\mathbb{Q}(\sqrt{a})$}
 	edge[-] node[pos=0.5, above left]{\scriptsize 2N} (L)
 	edge[-] node[pos=0.5, below left]{\scriptsize 2} (Q);
\end{tikzpicture}
\caption{Tower of extensions for the SISO code construction.}
\label{fig:siso_tower}
\end{small}
\end{figure}
Assume that $\mathcal{C}$ is division (cf. Lemma~\ref{lem:cda}). 
Let $ \sigma$ be the generator of $ \Gamma(\mathbf{L}/\mathbf{K})$, and fix a generator $ \eta $ of $ \Gamma(\mathbf{K}/\mathbf{F})$. 

 To have balanced energy and good decodability, it is necessary to slightly modify the matrix representation of the elements in $\mathcal{C}$, thus for $c,d \in \mathcal{O}_{\mathbf{L}}$, $\mathcal{O} \subset \mathcal{C}$ an order, instead of representing $x = c+\sqrt{\gamma} d \in \mathcal{O}$ by its left-regular representation $\lambda(x)$, we define the following similar and commonly used function that maintains the original determinant,
\begin{equation}\label{eqn:lambda_sim}
	\tilde{\lambda}: x \mapsto \left[\begin{smallmatrix} c & -\sqrt{-\gamma}\sigma(d) \\ \sqrt{-\gamma}d & \sigma(c) \end{smallmatrix}\right].
\end{equation}

\begin{theorem}\label{thm:single_antenna_relay_code}
Arising from the algebraic setup in Figure~\ref{fig:siso_tower} with $a < 0$, $\gamma < 0$, define the set
\[
		\mathcal{X} = \lbrace \Psi_{\eta,N}(X)\rbrace_{X \in \tilde{\lambda}(\mathcal{O})} = \Bigl\lbrace \diag\lbrace \eta^i(X) \rbrace_{i=0}^{N-1} \Bigm\vert X \in \tilde{\lambda}(\mathcal{O}) \Bigr\rbrace.
\]
	
The code $\mathcal{X}$ is of rank $8N$, rate $R=4$ rscu and has the NVD property. It is full-rate if $n_d = 2$. Moreover, $\mathcal{X}$ is conditionally $4$-GD, and its decoding complexity can be reduced from $|S|^{8N}$ to $|S|^{5N}$, where $S$ is the real constellation used, resulting in a complexity reduction of $37.5\%$.
\end{theorem}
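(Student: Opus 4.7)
The plan is to verify the four claims sequentially — rank and rate, NVD, conditional $4$-group decodability, and the decoding complexity — with the mutual orthogonality analysis of the weight matrices as the main technical step.

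Rank, rate, full-rateness, and NVD follow formally from the algebraic data. The algebra $\mathcal{C}$ has $\mathbb{Q}$-dimension $2\cdot 2\cdot 2N=8N$, so any order $\mathcal{O}\subset\mathcal{C}$ has $\mathbb{Z}$-rank $8N$. A short calculation shows that $\tilde{\lambda}$ is similar to the left-regular representation $\lambda$ of \eqref{eqn:left_reg} via conjugation by $D=\diag(1,\sqrt{-\gamma})$, hence is an injective ring homomorphism preserving determinants; since $\Psi_{\eta,N}$ is also injective, the code has rank $8N$ and codewords in $\Mat(2N,\mathbb{C})$, yielding $T=2N$, $R=8N/2N=4$ rscu, and $R=2n_d$ exactly when $n_d=2$. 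For NVD,
\[
\det\bigl(\Psi_{\eta,N}(\tilde{\lambda}(x))\bigr) = \prod_{i=0}^{N-1}\eta^i\bigl(\det\tilde{\lambda}(x)\bigr) = \Nm_{\mathbf{K}/\mathbf{F}}\bigl(\det\tilde{\lambda}(x)\bigr) \in \mathcal{O}_{\mathbf{F}},
\]
which is nonzero since $\mathcal{C}$ is division. Because $\mathbf{F}=\mathbb{Q}(\sqrt{-m})$ is imaginary quadratic, nonzero elements of $\mathcal{O}_{\mathbf{F}}$ are bounded away from $0$ in absolute value, so NVD follows.

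For the decodability analysis I would use the natural $\mathbb{Z}$-basis $\{\xi^i\sqrt{-m}^j\sqrt{a}^k\sqrt{\gamma}^l : 0\le i<N,\,j,k,l\in\{0,1\}\}$ of the order. Since $\tilde{\lambda}$ is a ring homomorphism with $\tilde{\lambda}(\xi)=\xi I$ and $\tilde{\lambda}(\sqrt{-m})=\sqrt{-m}\,I$, the pre-$\Psi$ weight matrices factor as $B_{ijkl}=\xi^i\sqrt{-m}^j M_{kl}$, where
\[
M_{00}=I,\quad M_{10}=\tilde{\lambda}(\sqrt{a}),\quad M_{01}=\tilde{\lambda}(\sqrt{\gamma}),\quad M_{11}=M_{10}M_{01}.
\]
The sign hypotheses $a<0$ and $\gamma<0$ make $\sqrt{a}$ purely imaginary and $\sqrt{-\gamma}$ real, which after explicit evaluation shows that $M_{00}$ is Hermitian, $M_{10},M_{01},M_{11}$ are anti-Hermitian, the four $M_{kl}$ are pairwise mutually orthogonal, and each $M_{kl}M_{kl}^{\dagger}$ is a real scalar multiple of $I$. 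Expanding
\[
B_{ijkl}B_{i'j'k'l'}^{\dagger} + B_{i'j'k'l'}B_{ijkl}^{\dagger} = \xi^{i+i'}(\sqrt{-m})^{j+j'}\bigl[(-1)^{j'}M_{kl}M_{k'l'}^{\dagger} + (-1)^j M_{k'l'}M_{kl}^{\dagger}\bigr]
\]
and splitting into cases by whether $j=j'$ and whether $(k,l)=(k',l')$, one concludes that this bilinear form vanishes precisely when exactly one of the two equalities holds.

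To assemble the partition, observe that among weight matrices with $j=0$, those with different $(k,l)$ are pairwise mutually orthogonal. Define $\Gamma_{(k,l)}=\{\Psi_{\eta,N}(B_{i,0,k,l})\}_{i=0}^{N-1}$ for each of the four values $(k,l)\in\{0,1\}^2$ (four groups of size $N$) and collect the remaining $4N$ weight matrices with $j=1$ into $\Gamma^{\mathcal{X}}$. Because $\mathbf{K}$ is a CM field, $\eta$ commutes with complex conjugation, so $\Psi_{\eta,N}(A)\Psi_{\eta,N}(B)^{\dagger}+\Psi_{\eta,N}(B)\Psi_{\eta,N}(A)^{\dagger}$ is the block-diagonal matrix with $i$-th block $\eta^i(AB^{\dagger}+BA^{\dagger})$; the $2\times 2$ classification lifts verbatim to the full $8N$ weight matrices, and the partition satisfies the conditional $4$-GD criterion of Definition~\ref{def:cond_g_decod}. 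Proposition~\ref{prop:complexity} then yields decoding complexity $|S|^{|\Gamma^{\mathcal{X}}|+\max_u|\Gamma_u|}=|S|^{4N+N}=|S|^{5N}$, a reduction of $3/8=37.5\%$ from the naive $|S|^{8N}$. The main obstacle throughout is the careful bookkeeping in the bilinear form calculation — tracking how complex conjugation acts on the scalar prefactors $\xi^i\sqrt{-m}^j$ and verifying that $a<0$ and $\gamma<0$ are exactly what is required to force both the Hermitian/anti-Hermitian dichotomy and the pairwise mutual orthogonality of the four core matrices $M_{kl}$.
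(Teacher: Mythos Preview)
Your proposal is correct and follows essentially the same approach as the paper: the same $\mathbb{Z}$-basis (your $M_{kl}$ are, up to sign, the paper's $\Gamma_{i,1}$, extended first by the real generators $\xi^i$ and then by $\sqrt{-m}$), the same norm computation for NVD, and the same partition of the $8N$ weight matrices into four groups of size $N$ plus a residual set of size $4N$. Your orthogonality analysis is in fact more explicit than the paper's---where the paper simply asserts ``a direct computation shows'' and displays the HRQF block pattern, you factor $B_{ijkl}=\xi^i(\sqrt{-m})^jM_{kl}$, carry out the bilinear-form expansion, and pin down exactly when it vanishes (precisely when exactly one of $j=j'$ and $(k,l)=(k',l')$ holds), which confirms both the upper-left $4\times 4$ zero block and that the remaining blocks are genuinely nonzero; this is a modest but genuine improvement in clarity, not a different route.
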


\begin{proof}
	Let $\lbrace \beta_1 = 1, \ldots , \beta_{N} \rbrace$ be an $\mathbf{F}$-basis of $\mathbf{K}$ and consider the following $\mathbf{K}$-basis of $\mathcal{C}$:
\[
		\left\{
		\Gamma_{1,1} = \left[\begin{smallmatrix} 1 & 0 \\ 0 & 1 \end{smallmatrix}\right], 
		\Gamma_{2,1} = \left[\begin{smallmatrix} \sqrt{a} & 0 \\ 0 & -\sqrt{a} \end{smallmatrix}\right], \Gamma_{3,1} = \left[\begin{smallmatrix} 0 & \sqrt{-\gamma}\sqrt{a} \\ \sqrt{-\gamma}\sqrt{a} & 0  \end{smallmatrix}\right], 
		\Gamma_{4,1} = \left[\begin{smallmatrix} 0 & -\sqrt{-\gamma} \\ \sqrt{-\gamma} & 0 \end{smallmatrix}\right] \right\}.
\]
We first extend this basis to an $\mathbf{F}$-basis of $\mathcal{C}$ as $\left\{ \Gamma_{i,j} = \Gamma_{i,1}\beta_j \right\}_{\substack{1 \le i \le 4 \\ 1 \le j \le N}}$, and further to a $\mathbb{Q}$-basis of $\mathcal{C}$ by complementing with $\Gamma_{i,j} = \left\{\sqrt{-m}\Gamma_{i-4,j}\right\}_{\substack{5 \le i \le 8 \\ 1 \le j \le N}}$. We get a $\mathbb{Z}$-basis for the code $\mathcal{X}$ as 
\[
	\mathcal{B} = \lbrace \Psi_{\eta,N}(\Gamma_{i,j})\rbrace_{\substack{1 \le i \le 8 \\ 1 \le j \le N}},
\]
which is of length $8N = \rk(\mathcal{X})$. Thus by Def.~\ref{def:rate}, the code has rate $R = 4$. 

Let now $\Psi_{\eta,N}(X) \in \mathcal{X}$ be a codeword, where $\eta$ denotes a generator of $\Gamma(\mathbf{K}/\mathbf{F})$. As assumed, the coefficients of $X$ are taken from the ring of integers $\mathcal{O}_{\mathbf{L}}$ of $\mathbf{L}$, thus $\det(X) \in \mathcal{O}_{\mathbf{K}}$ and hence
\[
		\det(\Psi_{\eta,N}(X)) = \prod\limits_{i=0}^{N-1}{\det(\eta^i(X))} = \prod\limits_{i=0}^{N-1}{\eta^i(\det(X))} 
		= \Nm_{\mathbf{K}/\mathbf{F}}(\det(X)) \in \mathcal{O}_{\mathbf{F}}.  
\] 
Since $\mathbf{F}$ is imaginary quadratic, it follows that $\det(\Psi_{\eta,N}(X)) \ge 1$, giving the NVD property. 

For deriving the decoding complexity, a direct computation shows that $\Gamma_{i,j}\Gamma_{u,v}^{\dagger} + \Gamma_{u,v}\Gamma_{i,j}^{\dagger} = \mathbf{0}$ for $1 \le i,u\le 4$, $i \neq u$ and $1 \le j,v \le N$, hence $\Psi_{\eta,N}(\Gamma_{i,j})\Psi_{\eta,N}(\Gamma_{u,v})^{\dagger} + \Psi_{\eta,N}(\Gamma_{u,v})\Psi_{\eta,N}(\Gamma_{i,j})^{\dagger} = \mathbf{0}$. 
Consequently, the matrix $M = (m_{ij})$ associated with the HRQF is of the form
\[
	M = \left[\begin{smallmatrix}
	\ast & 0 & 0 & 0 & \ast & \ast & \ast & \ast \\
	0 & \ast & 0 & 0 & \ast & \ast & \ast & \ast \\
	0 & 0 & \ast & 0 & \ast & \ast & \ast & \ast \\
	0 & 0 & 0 & \ast & \ast & \ast & \ast & \ast \\
	\ast & \ast & \ast & \ast & \ast & \ast & \ast & \ast \\
	\ast & \ast & \ast & \ast & \ast & \ast & \ast & \ast \\
	\ast & \ast & \ast & \ast & \ast & \ast & \ast & \ast \\
	\ast & \ast & \ast & \ast & \ast & \ast & \ast & \ast 
	\end{smallmatrix}\right]
\]
where each of the entries is an $N\times N$ matrix and is the zero matrix, if the corresponding entry is $0$. By Def.~\ref{def:cond_g_decod}, $\mathcal{X}$ is conditionally $4$-GD and exhibits decoding complexity $|S|^{4N+N} = |S|^{5N}$. 
\end{proof}

\begin{remark}
	ST codes constructed using the method from Theorem~\ref{thm:single_antenna_relay_code} will achieve the optimal DMT of the channel, according to \cite[Thm. 4]{yang}. 
\end{remark}

\begin{figure}[h]
\begin{small}
	\begin{minipage}{0.50\textwidth}
\centering
\begin{tikzpicture}[node distance=1cm]
 \node (alg) {$\mathcal{C} = (-3,-\frac{2}{\sqrt{5}})_{\mathbf{K}} \cong (\mathbf{L}/\mathbf{K},\sigma:\sqrt{-3} \mapsto -\sqrt{-3},-\frac{2}{\sqrt{5}})$};
 \node[below=0.5cm of alg] (L) {$\mathbf{L} = \mathbf{K}(\sqrt{-3})$}
 	edge[-] node[pos=0.5, right]{\scriptsize 2} (alg);
 \node[below=0.5cm of L] (K) {$\mathbf{K} = \mathbb{Q}(\imath,\xi)$}
 	edge[-] node[pos=0.5, right]{\scriptsize 2} (L);
 \node[below=0.5cm of K] (F) {$\mathbf{F} = \mathbb{Q}(\imath)$}
 	edge[-] node[pos=0.5, right]{\scriptsize 2} (K);
 \node[below=0.5cm of F] (Q) {$\mathbb{Q}$}
 	edge[-] node[pos=0.5, right]{\scriptsize 2} (F);
 \node[left=1cm of F] (Qi) {$\mathbb{Q}(\sqrt{-3})$}
 	edge[-] node[pos=0.5, above left]{\scriptsize 4} (L)
 	edge[-] node[pos=0.5, below left]{\scriptsize 2} (Q);
\end{tikzpicture}
\caption{Tower of extensions for a 2-relay SISO example code.}
\label{fig:siso_example1}
	\end{minipage}\hfill
	\begin{minipage}{0.49\textwidth}
\centering
\begin{tikzpicture}[node distance=1cm]
 \node (alg) {$\mathcal{C} = (-11,-1)_{\mathbf{K}} \cong (\mathbf{L}/\mathbf{K},\sigma:\sqrt{-11} \mapsto -\sqrt{-11},-1)$};
 \node[below=0.5cm of alg] (L) {$\mathbf{L} = \mathbb{Q}(\sqrt{-11},\zeta_7)$}
 	edge[-] node[pos=0.5, right]{\scriptsize 2} (alg);
 \node[below=0.5cm of L] (K) {$\mathbf{K} = \mathbb{Q}(\zeta_7)$}
 	edge[-] node[pos=0.5, right]{\scriptsize 2} (L);
 \node[below=0.5cm of K] (F) {$\mathbf{F} = \mathbb{Q}(\sqrt{-7})$}
 	edge[-] node[pos=0.5, right]{\scriptsize 3} (K);
 \node[below=0.5cm of F] (Q) {$\mathbb{Q}$}
 	edge[-] node[pos=0.5, right]{\scriptsize 2} (F);
 \node[left=0.5cm of F] (Qi) {$\mathbb{Q}(\sqrt{-11})$}
 	edge[-] node[pos=0.5, above left]{\scriptsize 6} (L)
 	edge[-] node[pos=0.5, below left]{\scriptsize 2} (Q);
\end{tikzpicture}
\caption{Tower of extensions for a 3-relay SISO example code.}
\label{fig:siso_example2}
	\end{minipage}
\end{small}
\end{figure}

\begin{example}
\label{exp:simo_fd}
For $N = 2$ relays and $\xi = \sqrt{5}$, consider the tower of extensions in Figure~\ref{fig:siso_example1}. 
The algebra $\mathcal{C}$ is division. To see this, note that $\mathfrak{q} = 3\mathcal{O}_{\mathbf{K}}$ be an ideal which splits as a product $\mathfrak{q} = \mathfrak{p}_1\mathfrak{p}_2$ of prime ideals $\mathfrak{p}_1$, $\mathfrak{p}_2$ of $\mathcal{O}_{\mathbf{K}}$, with residue class degree $\mathfrak{f}_{\mathfrak{p}_i} = 2$. Let $\mathfrak{p}$ be any of these prime ideals and consider its corresponding $\mathfrak{p}$-adic valuation $\nu_{\mathfrak{p}}(\cdot)$. We have $\nu_{\mathfrak{p}}(-3) = 1$, and further, $\mathcal{O}_{\mathbf{K}}/\mathfrak{p} \cong \mathbb{F}_{3^2}$. 
Since $5 \equiv 2 \bmod\ 3$, clearly $\sqrt{5}\notin \mathbb{F}_3$. Also, we have $-\frac{\sqrt{5}}{2} = -2\sqrt{5} = \sqrt{5}$ in characteristic $3$. It follows $\mathbb{F}_{3^2} = \mathbb{F}_3(\sqrt{5})$, and clearly $\sqrt[4]{5} \notin \mathbb{F}_{3^2}$. Hence, $-\frac{\sqrt{5}}{2}$ and thus $-\frac{2}{\sqrt{5}}$ is not a square $\bmod\ \mathfrak{p}$. By Lemma \ref{lem:quat}, $\mathcal{C}$ is division. 
 
Let $x = c + \sqrt{-\gamma}d$ with $c,d \in \mathcal{O}_{\mathbf{L}}$ and for $\sigma$ as above, 
$
	X = \tilde{\lambda}(x) = \left[\begin{smallmatrix} c & -\sqrt{-\gamma}\sigma(d) \\ \sqrt{-\gamma}d & \sigma(c) \end{smallmatrix}\right].
$
For $\langle \eta \rangle = \Gamma(\mathbf{K}/\mathbf{F})$, define the $2$-relay code
\[
	\begin{split}
	\mathcal{X} = \left\{ \Psi_{\eta,2}(X) \right\}_{X \in \tilde{\lambda}(\mathcal{O}_{\mathbf{L}})} 
	= \left\{\left. \diag\lbrace\eta^i(X)\rbrace_{i=0}^{1} = \left[\begin{smallmatrix} X & \\ & \eta(X) \end{smallmatrix}\right] \right| X \in \tilde{\lambda}(\mathcal{O}_{\mathbf{L}}) \right\}. 
	\end{split}
\]

The resulting code is a fully diverse NVD code of rank 16, which is conditionally $4$-GD having decoding complexity $|S|^{10}$ in contrast to $|S|^{16}$.  
\end{example}

\begin{example}
\label{exp:single_antenna_fd2}
Let $N = 3$ be the number of relays and $\zeta_7$ the $7^{\text{th}}$ root of unity, and consider the tower of extensions in Figure~\ref{fig:siso_example2}. Note that $\mathbf{K} = \mathbf{F}(\xi)$, where $\xi = \zeta_7+\zeta_7^{-1}$ is totally real. 

The algebra $\mathcal{C}$ is division. To see this, note that the ideal $\mathfrak{q} = (-11)\mathcal{O}_{\mathbf{K}}$ splits as $\mathfrak{q} = \mathfrak{p}_1\mathfrak{p}_2$ for distinct prime ideals $\mathfrak{p}_i$ of $\mathcal{O}_{\mathbf{K}}$ with residue class degree $\mathfrak{f}_{\mathfrak{p}_i}$. Let $\mathfrak{p}$ be either of these prime ideals, with corresponding $\mathfrak{p}$-adic valuation $\nu_{\mathfrak{p}}(\cdot)$. We have that $\mathcal{O}_{\mathbf{K}}/\mathfrak{p} \cong \mathbb{F}_{11^3}$. Next we establish that $-1$ is not a square in $\mathbb{F}_{11^3}$, which follows from the fact that $\ord(-1) = 2$ and $4 \nmid | \mathbb{F}_{11^3}^\times |$.  Hence, $-1 $ is not a square $\bmod\ \mathfrak{p}$. By Lemma \ref{lem:quat}, $\mathcal{C}$ is division.

Note that $\mathcal{C} \cong (-1,-11)_{\mathbf{K}}$ and let $x = c+\sqrt{-11}d$ with $c,d \in \mathbb{Z}[\imath,\zeta_7]$, $\sigma: \imath \mapsto -\imath$, so that 
$
	X = \tilde{\lambda}(x) = \left[\begin{smallmatrix} c & -\sqrt{11}\sigma(d) \\ \sqrt{11}d & \sigma(c) \end{smallmatrix}\right].
$ 
For $\langle \eta:\zeta_7 \mapsto \zeta_7^2 \rangle = \Gamma(\mathbf{K}/\mathbf{F})$, define the 3-relay code
\[
		\mathcal{X} = \lbrace \Psi_{\eta,3}(X) \rbrace_{X \in \tilde{\lambda}(\mathcal{O}_{\mathbf{L}})}
		= \Biggl\lbrace \diag\lbrace \eta^i(X)\rbrace_{i=0}^{2} = \left[\begin{smallmatrix} X & & \\ & \eta(X) & \\ & & \eta^2(X) \end{smallmatrix}\right] \Biggm\vert X \in \tilde{\lambda}(\mathcal{O}_{\mathbf{L}}) \Biggr\rbrace.
\] 

The constructed full-diversity code $\mathcal{X}$ is of rank 24, satisfies the NVD property and has decoding complexity $|S|^{15}$ in contrast to $|S|^{24}$.  
\end{example}

\subsubsection{\underline{MIMO}}
\label{ssubsec:multiple}

In the following, the single source is now equipped with $n_s \ge 1$ antennas, and for $p \ge 5$ prime, let $N  = (p-1)/2$ be the number of relays, each equipped with $n_r \ge 1$ antennas and such that\footnote{This assumption allows to construct codes from quaternion algebras, whereas the case $n_s + n_r > 4$ requires working with larger cyclic division algebras. While it is straightforward to achieve full-diversity and NVD, general statements about the exact decoding complexity reduction become harder.} $n_r + n_s = 4$. Assume further a single destination with $n_d \ge 1$ antennas, and consider the tower of extensions in Figure~\ref{fig:mimo_tower},
\begin{figure}[h]
\begin{small}
\centering
\begin{tikzpicture}[node distance=1cm]
 \node (alg) {$\mathcal{C} = (a,\gamma)_{\mathbf{K}} \cong (\mathbf{L}/\mathbf{K},\sigma:\sqrt{a} \mapsto -\sqrt{a},\gamma)$};
 \node[below=0.5cm of alg] (L) {$\mathbf{L} = \mathbf{K}(\sqrt{a})$}
 	edge[-] node[pos=0.5, right]{\scriptsize 2} (alg);
 \node[below=0.5cm of L] (K) {$\mathbf{K} = \mathbb{Q}(\xi)$}
 	edge[-] node[pos=0.5, right]{\scriptsize 2} (L);
 \node[below=0.5cm of K] (dummy) {};
 \node[below=0.5cm of dummy] (Q) {$\mathbb{Q}$}
 	edge[-] node[pos=0.5, right]{\scriptsize N} (K);
 \node[left=1cm of dummy] (Qi) {$\mathbf{F} = \mathbb{Q}(\sqrt{a})$}
 	edge[-] node[pos=0.5, above left]{\scriptsize N} (L)
 	edge[-] node[pos=0.5, below left]{\scriptsize 2} (Q);
\end{tikzpicture}
\caption{Tower of extensions for the MIMO code construction.}
\label{fig:mimo_tower}
\end{small}
\end{figure}
where $\mathbf{K} = \mathbb{Q}(\xi) = \mathbb{Q}^+(\zeta_p) \subset \mathbb{Q}(\zeta_p)$ is the maximal real subfield of the $p^{\text{th}}$ cyclotomic field, that is, $\xi = \zeta_p+\zeta_p^{-1}$, and $a \in \mathbb{Z}\backslash\left\{0\right\}$ is square-free. Let $\langle \sigma \rangle = \Gamma(\mathbf{L}/\mathbf{K})$ and $\langle \eta \rangle = \Gamma(\mathbf{L}/\mathbf{F})$. 

\begin{theorem}
\label{thm:mult_antenna_relay_code}
	In the setup as in Figure~\ref{fig:mimo_tower}, choose $a \in \mathbb{Z}_{< 0}$ such that $\mathfrak{p} = a\mathcal{O}_{\mathbf{K}}$ is a prime ideal. Fix further $\gamma < 0$ and $\theta \in \mathcal{O}_{\mathbf{K}}\cap \mathbb{R}^\times = \mathbb{Z}[\xi]\cap\mathbb{R}^\times$ such that 
	\begin{itemize}
		\item $\gamma$ and $\theta$ are both nonsquare $\bmod\ \mathfrak{p}$,
		\item the quadratic form $\langle\gamma,-\theta\rangle_{\mathbf{L}}$ is anisotropic, that is evaluates to zero if and only if $\gamma = \theta = 0$,
	\end{itemize}
and further let $\tau = \sigma$. Then, if $\mathcal{O}\subset \mathcal{C}$ is an order, the distributed ST code
\[
	\mathcal{X} = \left\{\left. \Psi_{\eta,N}(\tilde{\alpha}_{\tau,\theta}(X,Y)) = \diag\left\{\eta^i(\tilde{\alpha}_{\tau,\theta}(X,Y))\right\}_{i=0}^{N-1} \right| X,Y \in \tilde{\lambda}(\mathcal{O})\right\}
\]
is a full-diversity ST code of rank $8N$, rate $R = 2$ rscu, exhibits the NVD property and is FD. Its decoding complexity is $|S|^{k'}$, where $k' = 4N$ if $a \equiv 1 \bmod\ 4$, and $k' = 2N$ if $a \not\equiv 1 \bmod\ 4$, resulting in a reduction in complexity of $50\%$ and $75\%$, respectively.
\end{theorem}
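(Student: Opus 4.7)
The plan is to closely follow the blueprint of the proof of Theorem~\ref{thm:single_antenna_relay_code}, but insert the iterated construction of Proposition~\ref{prop:iterated_construction} between the quaternion algebra and the distributed template $\Psi_{\eta,N}$, so as to double the codeword dimension and accommodate the multiple-antenna setting. The argument splits naturally into three stages: establishing that both $\mathcal{C}$ and the iterated algebra $\tilde{\mathcal{I}} = \tilde{\alpha}_{\tau,\theta}(\mathcal{C},\mathcal{C})$ are division; transferring rank, rate, and NVD through the two constructions; and analyzing the Hurwitz--Radon orthogonality pattern to read off the decoding complexity. The central technical obstacle is verifying the hypothesis $\theta \neq z\tau(z)$ for all $z \in \mathcal{C}$ required by Proposition~\ref{prop:iterated_construction}(i).

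That $\mathcal{C}$ is division follows directly from Lemma~\ref{lem:quat}, since $\mathfrak{p} = a\mathcal{O}_{\mathbf{K}}$ is prime (so $\nu_{\mathfrak{p}}(a) = 1$) and $\gamma$ is nonsquare modulo $\mathfrak{p}$ by hypothesis. To apply Proposition~\ref{prop:iterated_construction}(i) with $\tau = \sigma$, the routine requirements $\tau^2 = \id$, $\tau(\gamma) = \gamma$ (as $\gamma \in \mathbf{K}$), commutation of $\tau$ with itself and with complex conjugation (the latter holds because $\sigma$ realizes complex conjugation on $\mathbf{L}$ since $a < 0$), and $\theta \in \mathbf{L}^{\langle\tau\rangle} = \mathbf{K}$ are all immediate. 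The nontrivial step is excluding $z\tau(z) = \theta$ for nonzero $z = c + ud \in \mathcal{C}$. Expanding with $u^2 = \gamma$ and $\kappa u = u\sigma(\kappa)$ yields
\[
    z\tau(z) = \bigl[c\sigma(c) + \gamma\sigma(d)^2\bigr] + u\sigma(c)\bigl[d + \sigma(d)\bigr].
\]
For this to equal $\theta \in \mathbf{K}$, the $u$-component must vanish, so either $c = 0$ or $d + \sigma(d) = 0$. If $c = 0$, the scalar part forces $\sigma(d)^2 \in \mathbf{K}$ and hence $d \in \mathbf{K}$ or $d \in \sqrt{a}\mathbf{K}$; both subcases collapse to $\gamma y^2 = \theta$ for some $y \in \mathbf{L}^\times$ (in the second subcase one absorbs $a$, itself a square in $\mathbf{L}$), contradicting the anisotropy of $\langle\gamma,-\theta\rangle_{\mathbf{L}}$. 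If $c \neq 0$ then $d = \sqrt{a}d'$ for some $d' \in \mathbf{K}$, and writing $c = c_0 + \sqrt{a}c_1$ the scalar equation becomes $c_0^2 - ac_1^2 + \gamma a(d')^2 = \theta$; after clearing denominators this reduces modulo $\mathfrak{p}$ to $\theta \equiv c_0^2 \pmod{\mathfrak{p}}$, contradicting the nonsquare hypothesis on $\theta$. Hence $\tilde{\mathcal{I}}$ is division, and since $\theta$ is totally real (as $\mathbf{K} \subset \mathbb{R}$), Proposition~\ref{prop:iterated_construction}(ii) applies, granting full-diversity of the iterated code together with $\det\tilde{\alpha}_{\tau,\theta}(X,Y) \in \mathbf{K}$.

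For the count, $\mathcal{O}$ has $\mathbb{Z}$-rank $\dim_{\mathbb{Q}}\mathcal{C} = 4N$, so $\tilde{\alpha}_{\tau,\theta}(\tilde{\lambda}(\mathcal{O}),\tilde{\lambda}(\mathcal{O}))$ has $\mathbb{Z}$-rank $8N$; composing with the $\mathbb{Z}$-linear injection $\Psi_{\eta,N}$ preserves this rank and produces $4N\times 4N$ codewords, giving rate $R = 8N/(4N) = 2$. Full-diversity transfers via Proposition~\ref{prop:full-diversity}. For NVD I would compute
\[
    \det\bigl(\Psi_{\eta,N}(\tilde{\alpha}_{\tau,\theta}(X,Y))\bigr) = \prod_{i=0}^{N-1}\eta^i\bigl(\det\tilde{\alpha}_{\tau,\theta}(X,Y)\bigr) = \Nm_{\mathbf{K}/\mathbb{Q}}\bigl(\det\tilde{\alpha}_{\tau,\theta}(X,Y)\bigr) \in \mathbb{Z},
\]
using that $\eta$ restricts to a generator of $\Gamma(\mathbf{K}/\mathbb{Q})$; nonvanishing follows from divisibility of $\tilde{\mathcal{I}}$, so $|\det| \geq 1$.

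Finally, for fast-decodability and the split by $a \bmod 4$, I would start with the four mutually orthogonal $\mathbf{K}$-generators $\Gamma_{1,1},\ldots,\Gamma_{4,1}$ of $\mathcal{C}$ from the proof of Theorem~\ref{thm:single_antenna_relay_code}, apply $\tilde{\alpha}_{\tau,\theta}(\cdot,0)$ and $\tilde{\alpha}_{\tau,\theta}(0,\cdot)$ to obtain eight generators of $\tilde{\mathcal{I}}$ with intra-slot mutual orthogonality preserved (by the final assertion of Proposition~\ref{prop:iterated_construction}(ii)), and then extend to a $\mathbb{Z}$-basis of the order. The action of $\Psi_{\eta,N}$ is block-diagonal and $\eta$ commutes with complex conjugation, so $\Psi_{\eta,N}(X)\Psi_{\eta,N}(Y)^{\dagger} + \Psi_{\eta,N}(Y)\Psi_{\eta,N}(X)^{\dagger}$ is the block-diagonal of $XY^{\dagger}+YX^{\dagger}$ and its Galois conjugates, so orthogonality is preserved. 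The arithmetic split enters through the integral basis of $\mathcal{O}_{\mathbf{F}}$: when $a \not\equiv 1 \bmod 4$, $\mathcal{O}_{\mathbf{F}} = \mathbb{Z}[\sqrt{a}]$ and the basis $\{1,\sqrt{a}\}$ keeps $\sqrt{a}$-scaled generators separate from the non-scaled ones, so that all iterated intra-slot orthogonalities survive, producing an HRQF whose conditional $g$-GD decomposition has effective parameter $k' = 2N$; when $a \equiv 1 \bmod 4$, $\mathcal{O}_{\mathbf{F}} = \mathbb{Z}[(1+\sqrt{a})/2]$ mixes $\sqrt{a}$-scaled and non-scaled components within a single basis element, destroying some orthogonalities and enlarging the leftover set to $k' = 4N$. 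Feeding these numbers into Proposition~\ref{prop:complexity} gives the stated $|S|^{4N}$ and $|S|^{2N}$ worst-case complexities, hence the $50\%$ and $75\%$ reductions, respectively.
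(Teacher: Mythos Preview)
Your proposal is correct and follows essentially the same three-stage architecture as the paper (division $\Rightarrow$ rank/rate/NVD $\Rightarrow$ HRQF analysis), but there are a few genuine differences in execution worth flagging.

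For the division of $\mathcal{C}$, the paper does \emph{not} invoke Lemma~\ref{lem:quat}; it argues directly that $\gamma \notin \Nm_{\mathbf{L}/\mathbf{K}}(\mathbf{L}^\times)$ because $\alpha_0^2 - a\alpha_1^2 = \gamma$ has no solutions in the totally real field $\mathbf{K}$ when $a<0$ and $\gamma<0$. Your route via Lemma~\ref{lem:quat} is equally valid and in fact uses more of the hypotheses than the paper does at this step. For the iterated algebra, the paper simply cites \cite[Cor.~8]{markin} to assert that the chosen $\theta,\tau$ satisfy the requirements of Proposition~\ref{prop:iterated_construction}, whereas you supply an explicit case analysis of $z\tau(z)=\theta$; your argument is more self-contained and actually explains why the two hypotheses on $\theta$ (nonsquare mod $\mathfrak{p}$, and anisotropy of $\langle\gamma,-\theta\rangle_{\mathbf{L}}$) are both needed. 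The ``clearing denominators'' step in your second case deserves one more line of care via $\mathfrak{p}$-adic valuations (using that $\gamma$ is a nonsquare unit mod $\mathfrak{p}$ to rule out cancellation in $c_1^2-\gamma(d')^2$), but the idea is sound.

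For NVD, the paper routes the determinant through $\Nm_{\mathbf{L}/\mathbf{F}}$ into $\mathcal{O}_{\mathbf{F}}$ (imaginary quadratic), while you route it through $\Nm_{\mathbf{K}/\mathbb{Q}}$ into $\mathbb{Z}$; these are the same computation viewed through the isomorphism $\Gamma(\mathbf{L}/\mathbf{F})\cong\Gamma(\mathbf{K}/\mathbb{Q})$, and your version is marginally cleaner. Finally, one terminological slip: the paper shows the resulting code is genuinely $g$-group decodable (with $g=2$ or $g=4$), not merely \emph{conditionally} $g$-GD; there is no leftover set $\Gamma^{\mathcal{X}}$. Your complexity exponents $k'=4N$ and $k'=2N$ are nonetheless correct.
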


\begin{proof}
	We first show that $\mathcal{X}$ is fully diverse, wherefore it is enough to show that $\mathcal{C}$ is division. By Lemma~\ref{lem:cda}, it suffices to show that $\gamma \notin \Nm_{\mathbf{L}/\mathbf{K}}(\mathbf{L}^{\times})$. Let $\alpha = \alpha_0+\sqrt{a}\alpha_1 \in \mathbf{L^{\times}}$. Then $\Nm_{\mathbf{L}/\mathbf{K}}(\alpha) = \alpha\sigma(\alpha) = \alpha_0^2-a\alpha_1^2$. Thus $\gamma = \Nm_{\mathbf{L}/\mathbf{K}}(\mathbf{L}^{\times}) \Leftrightarrow \alpha_0^2-a\alpha_1^2-\gamma = 0$ has nontrivial solutions in $\mathbf{K}$. But $a < 0$, $\gamma < 0$ and $\mathbf{K}$ is totally real, thus there can't be any solutions. 
	
	Let now $\left\{b_i\right\}_{i=1}^{N} = \left\{1,\xi,\ldots,\xi^{N-1}\right\}$ be a power basis of $\mathcal{O}_{\mathbf{K}}$ and consider 
\[	
	X = \tilde{\lambda}(x) = \left[\begin{smallmatrix} x_1+x_2\omega & -\sqrt{-\gamma}(x_3+x_4\sigma(\omega)) \\ \sqrt{-\gamma}(x_3+x_4\omega) & x_1+x_2\sigma(\omega) \end{smallmatrix}\right],
\]
where $\omega = \sqrt{a}$ if $a \not\equiv 1 \bmod\ 4$ and $\omega = \frac{1+\sqrt{a}}{2}$ otherwise, so that $\mathcal{O}_{\mathbf{F}} = \mathbb{Z}[\omega]$. Let 
\[
	\mathcal{X}_0 = \left\{\left. \sum\limits_{i=1}^{k}{s_i B_i^0} \right| s_i \in S \right\} \subset \tilde{\lambda}(\mathcal{O}),
\]
for a set of matrices $\mathcal{B}_0 = \left\{B_i^0\right\}_{i=1}^{k} = \left\{X(b_i,0,0,0),\ldots,X(0,0,0,b_i)\right\}_{i=1}^{N}$, thus $k = 4N$. Originating from this code, we construct a set of weight matrices defining the iterated code $\mathcal{X}_0^{\text{it}} = \left\{\left. \tilde{\alpha}_{\tau,\theta}(X,Y)\right| X,Y \in \mathcal{X}_0 \right\}$ as $\mathcal{B}_0^{\text{it}} = \left\{B_i^{\text{it}}\right\}_{i=1}^{8N} = \left\{\tilde{\alpha}_{\tau,\theta}(B_i^0,0),\tilde{\alpha}_{\tau,\theta}(0,B_i^0)\right\}_{i=1}^{4N}$, and get from $\mathcal{X}_0^{\text{it}}$ a defining set of weight matrices for the distributed code $\mathcal{X}$ as $\mathcal{B} = \left\{B_i\right\}_{i=1}^{8N} = \left\{\Psi_{\eta,N}(B_i^{\text{it}})\right\}_{i=1}^{8N}$.
The code $\mathcal{X}$ is thus of length $8N$ and by Def.~\ref{def:rate} has rate $R = 8N/4N = 2$ rscu. 

To see that $\mathcal{X}$ is NVD, note that by the restrictions imposed on the entries of elements in $\mathcal{X}_0$, we have $\det(\tilde{\alpha}_{\tau,\theta}(X,Y)) \in \mathcal{O}_{\mathbf{L}}$. Hence
\[
\begin{split}
	\det[\Psi_{\eta,N}(\tilde{\alpha}_{\tau,\theta}(X,Y))] &= \prod\limits_{i=0}^{N-1}{\det[\eta^i(\tilde{\alpha}_{\tau,\theta}(X,Y))]} = \prod\limits_{i=0}^{N-1}{\eta^i[\det(\tilde{\alpha}_{\tau,\theta}(X,Y))]} \\
	&= \Nm_{\mathbf{L}/\mathbf{F}}\left[\det(\tilde{\alpha}_{\tau,\theta}(X,Y))\right] \in \mathcal{O}_{\mathbf{F}}.
\end{split}
\]
As $\mathbf{F}$ is imaginary quadratic, $\det[\Psi_{\eta,N}(\tilde{\alpha}_{\tau,\theta}(X,Y))] \ge 1$. 

It remains to show that $\mathcal{X}$ is FD. To that end, group the matrices $\left\{B_i^0\right\}_{i=1}^{N}$ as follows: 
\[
\begin{split}
	G_1 = \left\{X(b_i,0,0,0)\right\}_{i=1}^{N};\quad G_2 = \left\{X(0,b_i,0,0)\right\}_{i=1}^{N}; \\ 
	G_3 = \left\{X(0,0,b_i,0)\right\}_{i=1}^{N};\quad G_4 = \left\{X(0,0,0,b_i)\right\}_{i=1}^{N}.
\end{split}
\]
Let $X_i \in G_i$. We have for $i = 1,2$ and $j = 3,4$ that $X_i X_j^{\dagger} + X_j X_i^{\dagger} = \mathbf{0}$, and moreover, 
\[
	X_1 X_2^\dagger + X_2 X_1^{\dagger} \begin{cases} = \mathbf{0} &\mbox{if } a \not\equiv 1 \bmod\ 4, \\ \neq \mathbf{0} &\mbox{if } a \equiv 1 \bmod\ 4, \end{cases}
\]
and the same holds for $X_3,X_4$. We thus conclude that $\mathcal{X}$ is $2$-GD if $a \equiv 1 \bmod\ 4$, exhibiting a decoding complexity of $|S|^{2N}$ and is $4$-GD otherwise, in which case its decoding complexity is $|S|^{N}$. By Prop.~\ref{prop:iterated_construction} (ii) and since $\theta$, $\tau$ are chosen to satisfy the requirements of the iterative construction (\cite[Cor.~8]{markin}), the iterated code $\mathcal{X}_0^{\text{it}}$ and consequently the distributed ST code $\mathcal{X}$ exhibit a decoding complexity of $|S|^{4N}$ in the former, and $|S|^{2N}$ in the latter case. 
\end{proof}

\begin{remark}
	We remark that by the results obtained in \cite{berhuy}, the decoding complexity of codes arising from division algebras can only be reduced by a factor of 4. The constructive method proposed in Theorem~\ref{thm:mult_antenna_relay_code} results in codes whose decoding complexity is reduced by either 50\% or 75\% compared to non-FD codes of the same rank. Thus, in the latter case, our codes indeed achieve the maximal complexity reduction by a factor of 4.  
\end{remark}

\begin{example}
\label{exp:mimo_fd}
We construct two codes for $N = 3$ relays, arising from the towers of extensions depicted in Figure~\ref{fig:mimo_example1}, where $\xi = \zeta_7+\zeta_7^{-1}$, $\gamma_1 = -1$, $\gamma_2 = -\frac{2}{1+\xi}$. 
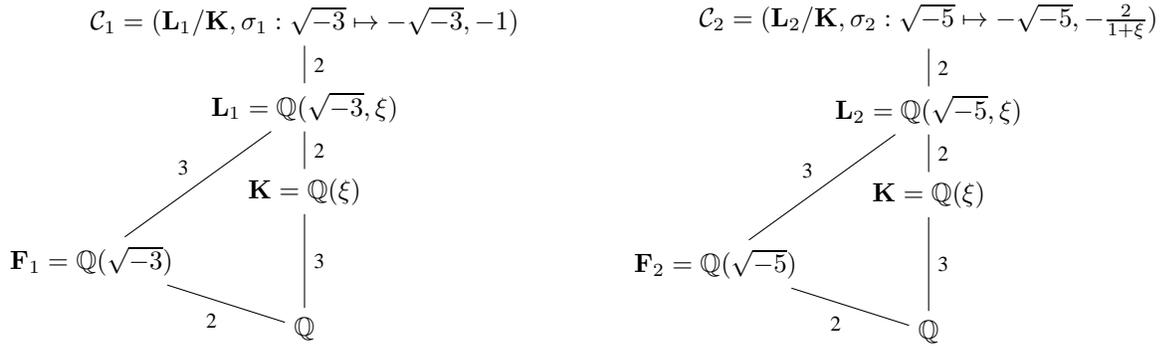
\begin{figure}[h]
\begin{small}
\centering
\begin{minipage}{0.49\textwidth}
\begin{tikzpicture}[node distance=1cm]
 \node (alg) {$\mathcal{C}_1 = (\mathbf{L}_1/\mathbf{K},\sigma_1:\sqrt{-3} \mapsto -\sqrt{-3}, -1)$};
 \node[below=0.5cm of alg] (L) {$\mathbf{L}_1 = \mathbb{Q}(\sqrt{-3},\xi)$}
 	edge[-] node[pos=0.5, right]{\scriptsize 2} (alg);
 \node[below=0.5cm of L] (K) {$\mathbf{K} = \mathbb{Q}(\xi)$}
 	edge[-] node[pos=0.5, right]{\scriptsize 2} (L);
 \node[below=0.5cm of K] (dummy) {};
 \node[below=0.5cm of dummy] (Q) {$\mathbb{Q}$}
 	edge[-] node[pos=0.5, right]{\scriptsize 3} (K);
 \node[left=1.5cm of dummy] (Qi) {$\mathbf{F}_1 = \mathbb{Q}(\sqrt{-3})$}
 	edge[-] node[pos=0.5, above left]{\scriptsize 3} (L)
 	edge[-] node[pos=0.5, below left]{\scriptsize 2} (Q);
\end{tikzpicture}
\end{minipage}\hfill
\begin{minipage}{0.49\textwidth}
\begin{tikzpicture}[node distance=1cm]
 \node (alg) {$\mathcal{C}_2 = (\mathbf{L}_2/\mathbf{K},\sigma_2:\sqrt{-5} \mapsto -\sqrt{-5}, -\frac{2}{1+\xi})$};
 \node[below=0.5cm of alg] (L) {$\mathbf{L}_2 = \mathbb{Q}(\sqrt{-5},\xi)$}
 	edge[-] node[pos=0.5, right]{\scriptsize 2} (alg);
 \node[below=0.5cm of L] (K) {$\mathbf{K} = \mathbb{Q}(\xi)$}
 	edge[-] node[pos=0.5, right]{\scriptsize 2} (L);
 \node[below=0.5cm of K] (dummy) {};
 \node[below=0.5cm of dummy] (Q) {$\mathbb{Q}$}
 	edge[-] node[pos=0.5, right]{\scriptsize 3} (K);
 \node[left=1.5cm of dummy] (Qi) {$\mathbf{F}_2 = \mathbb{Q}(\sqrt{-5})$}
 	edge[-] node[pos=0.5, above left]{\scriptsize 3} (L)
 	edge[-] node[pos=0.5, below left]{\scriptsize 2} (Q);
\end{tikzpicture}
\end{minipage}
\end{small}
\caption{Towers of extensions for two 3-relay MIMO example codes.}
\label{fig:mimo_example1}
\end{figure} 
\end{example}
In the following, for $i = 1,2$, let $\tau_i = \sigma_i$, $\langle\eta_i:\xi\mapsto\xi^2-2\rangle = \Gamma\left(\mathbf{L}_i/\mathbf{F}_i\right)$. Note that $\eta_1 \neq \eta_2$, since they have distinct fixed fields. Choose further $\theta_1 = 1-\xi = \zeta\theta_1'$, $\theta_2 = 3(1-\xi) = \zeta\theta_2'$, with $\zeta = -1$, $\theta_1', \theta_2' \in \mathbb{R}_{>0}$, and let $p_{\min}(x,\xi)$ be the minimal polynomial of $\xi$. 

Since $\mathfrak{p}_1 = (-3)\mathcal{O}_\mathbf{K}$ and $\mathfrak{p}_2 = (-5)\mathcal{O}_\mathbf{K}$ are prime, their residue class degree is $\mathfrak{f}_{\mathfrak{p}_i} = 3$, thus 
\[
	\mathcal{O}_{\mathbf{K}}/\mathfrak{p}_1 \cong \mathbb{F}_{3^3} \cong \mathbb{F}_{3}[x]/p_{\min}(x,\xi); \quad 
	\mathcal{O}_{\mathbf{K}}/\mathfrak{p}_2 \cong \mathbb{F}_{5^3}\cong \mathbb{F}_{5}[x]/p_{\min}(x,\xi). 
\]
We have $\ord(\gamma_1) = 2$ in $\mathcal{O}_{\mathbf{K}}/\mathfrak{p}_1$, and $\ord(\gamma_2) = 124$ in $\mathcal{O}_{\mathbf{K}}/\mathfrak{p}_2$. Since $4 \nmid |\mathbb{F}_{3^3}^\times|$, we establish that $\gamma_i$ is nonsquare $\bmod\ \mathfrak{p}_i$.
Further, $\ord(\theta_1) = 26$ in $\mathcal{O}_{\mathbf{K}}/\mathfrak{p}_1$, and $\ord(\theta_2) = 124$ in $\mathcal{O}_{\mathbf{K}}/\mathfrak{p}_2$. Hence, $\theta_i$ is not a square $\bmod\ \mathfrak{p}_i$. 
Moreover, the quadratic forms $\langle \gamma_i,-\theta_i \rangle_{\mathbf{L}_i}$ are anisotropic, as 
  \[
  	\begin{split}
  	-v_1^2-v_2^2\theta_1 = 0 \text{ for } v_1, v_2 \in \mathbf{L}_1 &\Leftrightarrow \xi-1 = v^2 \text{ for } v \in \mathbf{L}_1 \\
  	-\frac{2}{1+\xi} w_1^2-v_2^2\theta_2 = 0 \text{ for } w_1, w_2 \in \mathbf{L}_2 &\Leftrightarrow \frac{3}{2}(\xi^2-1) = w^2 \text{ for } w \in \mathbf{L}_2,
	\end{split}  
  \]
Since $\xi-1$, $\frac{3}{2}(\xi^2-1)$ are nonsquare in neither field, the conditions from Theorem~\ref{thm:mult_antenna_relay_code} are satisfied. 

Let $x_i \in \mathcal{O}_i \subset \mathcal{C}_i$, and set $\omega_1 = \frac{1+\sqrt{-3}}{2}$, $\omega_2 = \sqrt{-5}$. We define ST codes $\mathcal{X}_{0,i}$ consisting of codewords of the form 
$	X_i = \tilde{\lambda}(x_i) = \left[\begin{smallmatrix} x_{1,i} + x_{2,i}\omega_i & -\sqrt{-\gamma_i}(x_{3,i} + x_{4,i}\sigma_i(\omega_i)) \\ \sqrt{-\gamma_i}(x_{3,i} + x_{4,i}\omega_i) & x_{1,i} + x_{2,i}\sigma_i(\omega_i) \end{smallmatrix}\right],
$
where $x_{j,i} \in \mathcal{O}_{\mathbf{K}}$, $1 \le j \le 4$. Next, we iterate $\mathcal{X}_{0,i}$ to obtain the sets 
\[
	\mathcal{X}_{0,i}^{\text{it}} = \left\{\left.\tilde{\alpha}_{\tau_i,\theta_i}(X,Y) = \left[\begin{smallmatrix} X & \zeta\sqrt{\theta_i'}\tau_i(Y) \\ \sqrt{\theta_i'}Y & \tau_i(X) \end{smallmatrix}\right] \right| X,Y \in \tilde{\lambda}(\mathcal{O}_i) \right\}, 
\]
and finally adapt the two iterated codes to the 3-relay channel by applying the maps $\eta_i$, resulting in distributed ST codes
\[
	\mathcal{X}_i = \left\{\left.\Psi_{\eta_i,3}(\tilde{\alpha}_{\tau_i,\theta_i}(X,Y)) = \diag\left\{\eta_i^j(\tilde{\alpha}_{\tau_i,\theta_i}(X,Y))\right\}_{0 \le j \le 2} \right| X,Y \in \tilde{\lambda}(\mathcal{O}_i)\right\},
\]

Both resulting relay codes are fully diverse, exhibit the NVD property and are FD. While $\mathcal{X}_1$ is $2$-GD with decoding complexity $|S|^{12}$ as opposed to $|S|^{24}$, $\mathcal{X}_2$ is $4$-GD with decoding complexity $|S|^{6}$ in contrast to $|S|^{24}$, resulting in a complexity reduction of $50\%$ and $75\%$, respectively. 

\begin{figure}[h]
\begin{small}
\begin{minipage}[b]{0.46\textwidth}
\begin{tikzpicture}[node distance=1cm]
 \node (alg) {$\mathcal{C} = (-3,-1)_{\mathbf{K}} \cong (\mathbf{L}/\mathbf{K},\sigma:\sqrt{-3} \mapsto -\sqrt{-3}, -1)$};
 \node[below=0.5cm of alg] (L) {$\mathbf{L} = \mathbb{Q}(\sqrt{-3},\xi)$}
 	edge[-] node[pos=0.5, right]{\scriptsize 2} (alg);
 \node[below=0.5cm of L] (K) {$\mathbf{K} = \mathbb{Q}(\xi)$}
 	edge[-] node[pos=0.5, right]{\scriptsize 2} (L);
 \node[below=0.5cm of K] (dummy) {};
 \node[below=0.5cm of dummy] (Q) {$\mathbb{Q}$}
 	edge[-] node[pos=0.5, right]{\scriptsize 5} (K);
 \node[left=1cm of dummy] (Qi) {$\mathbf{F} = \mathbb{Q}(\sqrt{-3})$}
 	edge[-] node[pos=0.5, above left]{\scriptsize 5} (L)
 	edge[-] node[pos=0.5, below left]{\scriptsize 2} (Q);
\end{tikzpicture}
\caption{Tower of extensions for a 5-relay MIMO code.}
\label{fig:mimo_example2}
\end{minipage}\hfill
\begin{minipage}[b]{0.52\textwidth}
\begin{tikzpicture}[node distance=1cm]
 \node (alg) {$\mathcal{C} = (-3,1-\zeta_5)_{\mathbf{K}} \cong (\mathbf{L}/\mathbf{K},\sigma:\sqrt{-3} \mapsto -\sqrt{-3}, 1-\zeta_5)$};
 \node[below=0.5cm of alg] (L) {$\mathbf{L} = \mathbb{Q}(\zeta_5,\sqrt{-3})$}
 	edge[-] node[pos=0.5, right]{\scriptsize 2} (alg);
 \node[below=0.5cm of L] (K) {$\mathbf{K} = \mathbb{Q}(\zeta_5)$}
 	edge[-] node[pos=0.5, right]{\scriptsize 2} (L);
 \node[below=0.5cm of K] (dummy) {};
 \node[below=0.5cm of dummy] (Q) {$\mathbb{Q}$}
 	edge[-] node[pos=0.5, right]{\scriptsize 4} (K);
 \node[left=1cm of dummy] (Qi) {$\mathbf{F} = \mathbb{Q}(\sqrt{-3})$}
 	edge[-] node[pos=0.5, above left]{\scriptsize 4} (L)
 	edge[-] node[pos=0.5, below left]{\scriptsize 2} (Q);
\end{tikzpicture}
\caption{Tower of extensions for a 4-relay MIMO code.}
\label{fig:mimo_counterexample}
\end{minipage}
\end{small}
\end{figure}

\begin{example}
We construct a FD distributed ST code for $N = 5$ relays, arising from the tower of extensions in Figure~\ref{fig:mimo_example2}, where $\xi = \zeta_{11}+\zeta_{11}^{-1}$. Let $\tau = \sigma$ and $\langle \eta:\xi\mapsto\xi^2-2 \rangle = \Gamma(\mathbf{L}/\mathbf{F})$. Choose $\theta = 1-\xi = \zeta\theta'$ with $\zeta = -1$ and $\theta' \in \mathbb{R}_{>0}$. Note that $\mathfrak{p} = (-3)\mathcal{O}_{\mathbf{K}}$ is a prime ideal, and further $\ord(\gamma) = 2$, $\ord(\theta) = 242$ in $\mathcal{O}_{\mathbf{K}}/\mathfrak{p}$. Since there is no element of order $4$ in $\mathcal{O}_{\mathbf{K}}/\mathfrak{p}$, both elements are nonsquare $\bmod\ \mathfrak{p}$. Moreover, the quadratic form $\langle \gamma,-\theta\rangle_{\mathbf{L}}$ is anisotropic, as 
\[
-v_1^2 - v_2^2\theta = 0 \text{ for } v_1,v_2 \in \mathbf{L} \Leftrightarrow -\theta = v^2
\] 
for some $v \in \mathbf{L}$. But $-\theta$ is not a square in $\mathbf{L}$, thus the conditions from Theorem~\ref{thm:mult_antenna_relay_code} are satisfied. 

Let $x\in \mathcal{O} \subset \mathcal{C}$, $\omega = \frac{1+\sqrt{-3}}{2}$, and for $x_1,\ldots,x_4 \in \mathcal{O}_{\mathbf{K}}$, we construct a ST code $\mathcal{X}_0$ consisting of codewords of the form 
$
	X = \tilde{\lambda}(x) = \left[\begin{smallmatrix} x_1+x_2\omega & -(x_3+x_4\sigma(\omega)) \\ x_3+x_4\omega & x_1+x_2\sigma(\omega) \end{smallmatrix}\right].
$

To adapt this code to the proposed scenario, we first iterate $\mathcal{X}_0$ to obtain the set
\[
	\mathcal{X}_0^{\text{it}} = \left\{\left.\tilde{\alpha}_{\tau,\theta}(X,Y) = \left[\begin{smallmatrix} X & \zeta\sqrt{\theta'}\tau(Y) \\ \sqrt{\theta'}Y & \tau(X) \end{smallmatrix}\right] \right| X,Y \in \tilde{\lambda}(\mathcal{O}) \right\}, 
\]
and finally make use of the map $\eta$ to construct the distributed ST code 
\[
	\mathcal{X} = \left\{\left.\Psi_{\eta,5}(\tilde{\alpha}_{\tau,\theta}(X,Y)) = \diag\left\{\eta^i(\tilde{\alpha}_{\tau,\theta}(X,Y))\right\}_{0 \le i \le 4}
	\right| X,Y \in \tilde{\lambda}(\mathcal{O})\right\}.
\]

The resulting relay code is fully diverse and moreover NVD. It is FD and more specifically $2$-GD with decoding complexity $|S|^{20}$ as opposed to $|S|^{40}$, thus resulting in a reduction of 50\%. 
\end{example}

\begin{example}
We conclude this section with an example for $N = 4$ relays that demonstrates the importance of the conditions in Theorem~\ref{thm:mult_antenna_relay_code}. Consider the algebraic setup in Figure~\ref{fig:mimo_counterexample}, where $\zeta_5$ is the $5^{\text{th}}$ root of unity and $\xi = \zeta_5 + \zeta_5^{-1}$. Let $\tau = \sigma$ and $\langle \eta \rangle = \Gamma(\mathbf{L}/\mathbf{F})$. Choose further $\theta = \frac{\zeta_5+1}{\zeta_5-1}$. The quaternion algebra $\mathcal{C}$ is division, and the choice of $\tau$ and $\theta$ satisfy the criteria required in Theorem~\ref{thm:mult_antenna_relay_code}. 
To see this, note that $\mathfrak{p} = (-3)\mathcal{O}_{\mathbf{K}}$ is a prime ideal with residue field $\mathcal{O}_{\mathbf{K}}/\mathfrak{p} \cong \mathbb{F}_{3^4}$. The order of $\gamma$ and $\theta$ within the multiplicative group $\mathbb{F}_{3^4}^{\times}$ are $\ord(\gamma) = 80$, $\ord(\theta) = 16$. Since there is no element of order $32$ in $\mathbb{F}_{3^4}^{\times}$, they are both nonsquare $\bmod\ \mathfrak{p}$. 

Further, the quadratic form $\langle \gamma, -\theta \rangle_{\mathbf{L}}$ is anisotropic. This is as for $v_1, v_2 \in \mathbf{L}$,
\[
	v_1^2\gamma - v_2^2\theta = 0 \Leftrightarrow v^2 = \frac{1+\zeta_5}{(1-\zeta_5)(\zeta_5-1)} = -\frac{1}{5}\alpha 
\]
with $\alpha = 3\zeta_5^3+4\zeta_5^2+3\zeta_5$ and $v \in \mathbf{L}$. It thus suffices to show that $\alpha$ is not a square in $\mathbf{L}$. But it is $\alpha = p_1 p_2^2$, for $p_1, p_2$ prime.  

Let $x = c+\sqrt{\gamma}d$, $c,d \in \mathcal{O}_{\mathbf{L}}$ and for 
$
	X = \lambda(x) = \left[\begin{smallmatrix} c & \gamma\sigma(d) \\ d & \sigma(c) \end{smallmatrix}\right],
$
define the distributed ST code
\[
		\mathcal{X} = \left\{\left. \Psi_{\eta,4}(\alpha_{\tau,\theta}(X,Y))		= \diag\left\{\eta^i(\alpha_{\tau,\theta}(X,Y))\right\}_{i=0}^{3}
		 \right| X,Y \in \lambda(\mathcal{O}) \right\}.
\]

The choices of $\gamma$ and $\mathbf{K}$ do not agree with Theorem~\ref{thm:mult_antenna_relay_code}, and the constructed ST code is in fact not FD. The resulting code exhibits a decoding complexity of $|S|^{30}$ as opposed to $|S|^{32}$, where the reduction is merely due to the Gram-Schmidt orthogonalization. 
\end{example}

\subsection{Simulation Results}
\label{subsec:simulations}
The construction methods proposed in the previous section facilitate the design of ST codes for the N-relay SIMO and MIMO channel which, in addition to being fully diverse and having the NVD property, are FD. The goal of this section is to disclose the actual performance of explicit codes constructed with the proposed methods. 

We start by comparing the performance of the FD code constructed in Example~\ref{exp:simo_fd} with the optimal ST code proposed in \cite[Section~VI-B]{yang}, a lifted version of the Golden code, for two relays with a single antenna. We fix the number of antennas at the destination to be $n_d = 4$. In addition, we also illustrate the performance of an unscaled version of the lifted Golden code, whose entries, as opposed to the scaled version, are only restricted to lie in $\mathcal{O}_{\mathbf{L}} = \mathbb{Z}\left[\zeta_8,\frac{1+\sqrt{5}}{2}\right]$. 
\begin{figure}[h]
\centering
	\includegraphics[width=0.45\textwidth, height=0.3\textwidth]{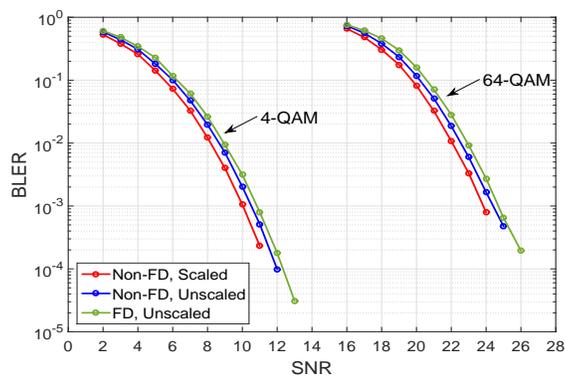}
	\caption{Comparison of an optimal, non-FD code and an example FD code with a complexity reduction of $37.5\%$ for $2$ relays, with 4-QAM (left) and 64-QAM (right) signaling, with $n_s = n_r = 1$, $n_d = 4$.}
	\label{fig:SIMO_FD}
\end{figure}

From Figure~\ref{fig:SIMO_FD}, we see that the proposed FD code and both versions of the lifted Golden code perform comparably for both depicted signaling sets, $4$-QAM and $64$-QAM. The proposed code, however, exhibits a decoding complexity of $|S|^{10}$, while the decoding complexity of the lifted Golden code is $|S|^{16}$, where $S$ is the underlying real signaling alphabet. In exchange for the significantly lower decoding complexity, the FD code shows a loss of merely 1dB, while achieving a similar diversity. We point to Remark~\ref{rmk:improvement} for a short note on possible improvements. 

For the simulations in the MIMO scenario, we remark that no codes can be found in the literature for $N \ge 3$ relays. Thus, we compare the two codes constructed in Example~\ref{exp:mimo_fd} using the method from Theorem~\ref{thm:mult_antenna_relay_code}. For the simulations, we fix $n_d = 6$.  
\begin{figure}[h]
\centering
	\includegraphics[width=0.45\textwidth, height=0.3\textwidth]{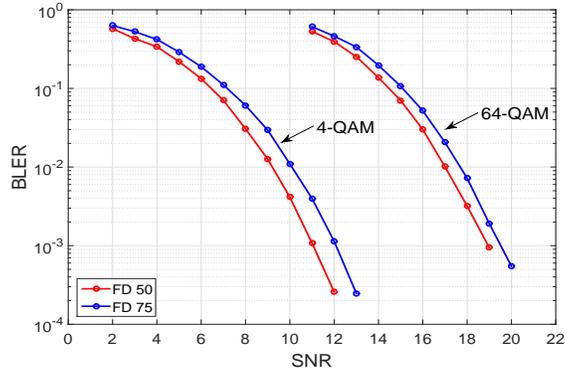}
	\caption{Comparison of two example FD codes with a complexity reduction of $50\%$ (FD 50) and $75\%$ (FD 75), respectively, for $3$ relays, with 4-QAM (left) and 64-QAM (right) signaling, with $n_s + n_r = 4$, $n_d = 6$. }
	\label{fig:MIMO_FD}
\end{figure}
The constructed codes, as can be seen from the figure, perform comparably, exhibiting a similar diversity, and their decoding complexity is $|S|^{12}$ and $|S|^6$, respectively, in contrast to $|S|^{24}$ of a non-FD code of the same rank, where $S$ is the effective real signaling alphabet. The loss in performance of approximately 1 dB is thus traded off against a considerably lower decoding complexity.  

\begin{remark}
	As remarked above, a higher number of receive antennas than needed for the codes to be full-rank was used in order to facilitate the simulations for the larger signaling sets. The relative difference in performance is the same for the minimal required number of antennas, although the relevant $\SNR$, as one would expect, would be shifted to the right.
\end{remark}

\begin{remark}
\label{rmk:improvement}
We want to remark that the codes constructed using the methods in Theorem~\ref{thm:single_antenna_relay_code} and \ref{thm:mult_antenna_relay_code} could be improved by reshaping the underlying lattices, for instance by restricting the elements of the codewords to some suitable ideal, and further by conveniently scaling the codewords. This optimization is however not treated in this article. 
\end{remark}

\section{Fast-Decodable Noncooperative Space--Time Codes}
\label{sec:mac_stc}
In this section we consider the transmissions by $N$ users to a joint destination, \emph{e.g.}, an uplink transmission to a base station. Both the users and the destination can be equipped with multiple antennas. In contrast to the previous section, no cooperation is allowed between the users. 

\subsection{Multiple-Access Channel}
\label{subsec:ma_channel}
 Assuming a noncooperative multi-user communications scenario, the channel is known as either a symmetric or asymmetric \emph{multiple-access channel} (MAC), depending on whether all users are equipped with the same or a different number of transmit antennas. 
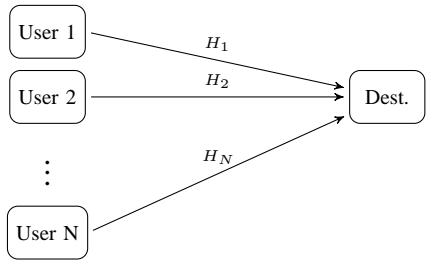
\begin{figure}[h]
\begin{small}
\centering
\begin{tikzpicture}[node distance=1cm]
\node[user] (u1) {\scriptsize User 1};
\node[user, below=0.15cm of u1] (u2) {\scriptsize User 2};
\node[below=0.15cm of u2] (vert) {\scriptsize $\textbf \vdots$};
\node[user, below=0.15cm of vert] (uN) {\scriptsize User N};
\node[user, right=3.5cm of u2] (dest) {\scriptsize Dest.}
	edge[pil_rev] node[pos=0.5, above]{\tiny $H_1$} (u1.east)
	edge[pil_rev] node[pos=0.5, above]{\tiny $H_2$} (u2.east)
	edge[pil_rev] node[pos=0.5, above]{\tiny $H_N$} (uN.east);
\end{tikzpicture}
\caption{System model for the K-user multiple access channel.}
\label{fig:mac_systemfig}
\end{small}
\end{figure}
The transmission of user $K_i$, $i = 1,\ldots,N$, is a point-to-point communication problem over a wireless MIMO channel modeled by a matrix $H_i$ of corresponding size. The ST code matrices of every user are generated independently of those of the remaining users. For the sake of exposure, we assume that every user $K_i$ is equipped with $n_s$ antennas, while the destination is assumed to have  $n_d$ antennas. The disadvantage of having independent code matrices is that the overall ST code does not exhibit a lattice structure, but can still be represented via a set of linear dispersion matrices acting as weight matrices. An important consequence is the following result. 

\begin{proposition}\label{prop:mac_nvd}\cite[Thm.~3]{francis_dmt_mac}
	For any $N > 1$ and $n_s \ge 1$, there do not exist any linear MIMO-MAC codes satisfying the NVD criterion. 
\end{proposition}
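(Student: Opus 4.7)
The idea is to exploit the absence of cooperation between the $N$ users in order to force any codeword difference to be rank-deficient. A linear MIMO-MAC code is specified by $N$ independent per-user linear codes, so the effective input to the equivalent channel $Y = [H_1\ \cdots\ H_N]\,X + V$ is the stacked matrix $X = [X_1^{T}\ \cdots\ X_N^{T}]^{T} \in \Mat(Nn_s\times T,\mathbb{C})$, with each $X_i$ ranging independently over user $i$'s infinite code. The natural NVD criterion in this setting requires that $\inf_{X\neq X'}\det[(X-X')(X-X')^{\dagger}] > 0$ as the signaling alphabet grows to $\mathbb{Z}$, which already presupposes $T \geq Nn_s$ so that the Gram matrix can be generically full rank.

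First I would observe that independence of the users' codebooks makes the admissible codeword differences into a Cartesian product: any block pattern $\Delta X = [\Delta X_1^T\ \cdots\ \Delta X_N^T]^T$ in which the $\Delta X_i$ are chosen independently (some possibly zero, at least one nonzero) arises from a legitimate pair $(X,X')$. In particular, since each individual per-user code contains at least two distinct codewords, I can pick $X$ and $X'$ that agree on every user's transmission except user~$1$'s. The resulting $\Delta X$ then has only one nonzero block, namely $\Delta X_1 \in \Mat(n_s\times T,\mathbb{C})$.

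The key step is then an elementary rank argument: such a $\Delta X$ has at most $n_s$ nonzero rows, so $\rk(\Delta X) \le n_s$, and therefore the $Nn_s\times Nn_s$ matrix $(\Delta X)(\Delta X)^{\dagger}$ has rank at most $n_s < Nn_s$ whenever $N > 1$. Its determinant thus vanishes, contradicting the NVD criterion. The only ingredient used is the independence of the per-user codebooks, so no algebraic refinement of the individual codes can repair the obstruction, and NVD is structurally incompatible with the MAC constraint for all $N > 1$ and all $n_s \ge 1$. I do not anticipate any real difficulty beyond fixing the convention for the "combined codeword'' matrix on which the determinant is evaluated; under any reasonable choice (the stacked input above, or equivalently $\bar{H}\,\Delta X$ at generic $\bar{H}$), the same rank-deficiency bound applies and the argument goes through unchanged.
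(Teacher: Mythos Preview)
Your argument is correct. The key observation---that independence of the per-user codebooks permits a nonzero difference $\Delta X$ supported on a single user's $n_s$ rows, forcing $\rk(\Delta X)\le n_s < Nn_s$ and hence $\det[(\Delta X)(\Delta X)^\dagger]=0$---is exactly the obstruction, and no assumption beyond $N>1$ and linearity of each user's code is needed.

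Note, however, that the paper does not supply its own proof of this proposition: it is quoted verbatim from \cite[Thm.~3]{francis_dmt_mac} and used as a black box to motivate the conditional NVD property. So there is no in-paper argument to compare against. Your rank-deficiency argument is the standard one and is essentially what appears in the cited source; the only cosmetic point is that you might state explicitly that each per-user code, being linear of positive rank, contains at least two distinct codewords, which is what licenses the choice $\Delta X_1\neq 0$, $\Delta X_j=0$ for $j\ge 2$.
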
    

\begin{remark}
	The above proposition motivated the definition of the \emph{Conditional Nonvanishing Determinant} (CNVD) property introduced in \cite{hollanti_gnvd}. A ST code has the CNVD property if its minimum determinant is either zero or bounded from below. This property, accompanied with a suitable code rate, was shown to be sufficient for achieving the optimal  MAC-DMT \cite{francis_dmt_mac}. 
\end{remark}

\subsection{Fast-Decodable MAC Space--Time Codes}
\label{subsec:mac_codes}
Let $K \ge 2$ be the number of transmitters communicating with a single destination. While each of the users might use a different underlying algebraic structure for code construction, the symmetric scenario ensures that the codewords from every user will be of the same dimensions. 

For $k \in \left\{1,\ldots,K\right\}$ consider user $K_k$ employing a ST code $\mathcal{X}_{k}$ carved from a CDA $\mathcal{C}_k = (\mathbf{L}_k/\mathbf{K}_k,\sigma_k,\gamma_k)$ of degree $n$. A codeword $X_k \in \mathcal{X}_k$ is of the form 
\[
X_k = \lambda(x_k) = \sum\limits_{i=1}^{n^2}{s_{k,i}B_{k,i}}
\] 
for some $x_k \in \mathcal{O}_k \subset \mathcal{C}_k$, where $\lambda$ is as in \eqref{eqn:left_reg} or \eqref{eqn:lambda_sim}, $s_{k,i} \in S_k$ are the signaling coefficients and $\left\{B_{k,i}\right\}_{1 \le i \le n^2}$ is the set of weight matrices. Let $\mathbf{F}_k \subset \mathbf{K}_k$ be an intermediate field so that $\mathbf{K}_k/\mathbf{F}_k$ is cyclic Galois of degree $m$ with $\langle \tau_k \rangle = \Gamma(\mathbf{K}_k/\mathbf{F}_k)$. Then, the overall codeword of user $k$ is of the form $U_k = \left[\begin{smallmatrix} X_k & \tau_k(X_k) & \cdots & \tau_k^{m-1}(X_k) \end{smallmatrix}\right]$, and the overall transmitted codeword by all users is 
\[
	X = \left[\begin{smallmatrix} X_1 & \tau_1(X_1) & \cdots & \tau_1^{m-1}(X_1) \\ 
 \vdots & \vdots &  & \vdots \\ X_K & \tau_K(X_K) & \cdots & \tau_K^{m-1}(X_K) \end{smallmatrix}\right].
\]

A set of linear dispersion matrices can be given for the overall code by complementing the corresponding lattice basis of each individual user with zero-matrices of suitable size, namely 
\[
	\mathcal{B} = \left\{\left[\begin{smallmatrix} \mathbf{0}_{(k-1)n} \\ B_{k,i} \\ \mathbf{0}_{(K-k)n}\end{smallmatrix}\right]_{Kn}\right\}_{\substack{1 \le i \le n^2 \\ 1 \le k \le K}} =: \left\{B_{i}\right\}_{1 \le i \le Kn^2}.
\]
The resulting code is not NVD due to Prop.~\ref{prop:mac_nvd}, but is FD and exhibits the CNVD property if the algebraic structures are chosen properly, as a straightforward adaptation of the results about distributed ST codes of the previous section to this noncooperative scenario. 

\begin{example}
	Consider $K = 2$ transmitters equipped with $n_s = 2$ antennas each, and a single destination with $n_d = 4$. Both transmitters carve their ST codes from the quaternion algebra $\mathcal{C}$ from Figure~\ref{fig:siso_tower}, with $a = -3$, $\xi = \imath$, and $m = 2$. The algebra $\mathcal{C}$ is division for $\gamma = -\frac{2}{\sqrt{5}}$. Let $\langle \tau:\imath\mapsto -\imath\rangle = \Gamma(\mathbf{K}/\mathbf{F})$. Then, for $k = 1,2$, codewords are of the form 
$
	U_k = \left[\begin{smallmatrix} X_k & \tau(X_k)\end{smallmatrix}\right],
$
where for $x_k \in \mathcal{O} \subset \mathcal{C}$,
$
X_k = \tilde{\lambda}(x_k) = \left[\begin{smallmatrix} x_{k,1}+x_{k,2}\theta & -\sqrt{-\gamma}(x_{k,3}+x_{k,4}\sigma(\theta)) \\ \sqrt{-\gamma}(x_{k,3}+x_{k,4}\theta) & x_{k,1}+x_{k,2}\sigma(\theta) \end{smallmatrix}\right],
$
with $\theta = \frac{1+\sqrt{-3}}{2}$. The overall transmitted codewords are of the form 
\[
X = \left[\begin{smallmatrix} X_1 & \tau(X_1) \\ X_2 & \tau(X_2) \end{smallmatrix}\right].
\]

For a basis $\left\{b_i\right\}_{1 \le i \le 4} = \left\{1,\imath,\sqrt{-2},\imath\sqrt{-2}\right\}$ of $\mathcal{O}_{\mathbf{K}}$, a set of weight matrices for this code is 
\[
	\left\{B_i\right\}_{1 \le i \le 32} := \left\{\left[\begin{smallmatrix} X_1(b_i,0,0,0) & \tau(X_1(b_i,0,0,0)) \\ X_2(0,0,0,0) & \tau(X_2(0,0,0,0)) \end{smallmatrix}\right],\ldots,\left[\begin{smallmatrix} X_1(0,0,0,0) & \tau(X_1(0,0,0,0)) \\ X_2(0,0,0,b_i) & \tau(X_2(0,0,0,b_i)) \end{smallmatrix}\right]\right\}_{1 \le i \le 4}.
\]

We briefly disclose the performance of the constructed FD MAC code by comparing to two strongly performing codes, which we denote by \emph{NFD1} \cite[Section~IV-D]{francis_mac_constructions}, and \emph{NFD2} \cite[Section~III-E]{francis_mac_constructions}. For fair comparison, since the latter was originally constructed for $n_d = 2$ receive antennas and differs in terms of code rate from the other two codes, we use 16-QAM signaling for NFD2 instead of 4-QAM in order to match the data rate, and allow for $n_d = 4$ antennas at the destination for improved diversity. 
\begin{figure}[h]
\centering
	\includegraphics[width=0.45\textwidth, height=0.3\textwidth]{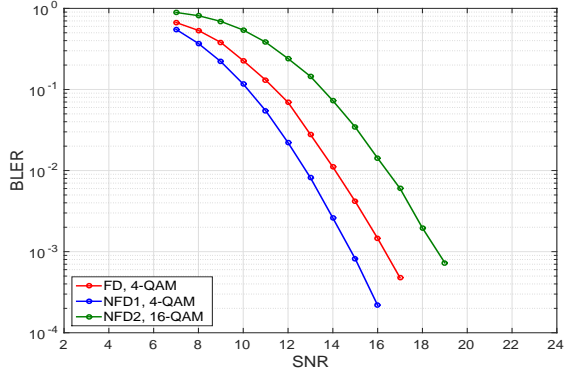}
	\caption{Comparison of two non-FD MAC codes and an example FD MAC code for two transmitters, $n_s = 2$, $n_d = 4$.}
	\label{fig:MAC_FD}
\end{figure}
From Figure~\ref{fig:MAC_FD}, we see that the proposed FD MAC code performs comparably to NFD1, achieving a similar diversity and showing a mere loss of 1-1.5dB, while outperforming NFD2. The proposed code, however, is $2$-GD, exhibiting a decoding complexity of $|S|^{16}$ as opposed to $|S|^{32}$, resulting in a significant reduction of $50\%$ in decoding complexity. Note further that the depicted performance is before any attempts of optimization, and it can be expected that the gap could be narrowed down further without compromising fast-decodability. 
\end{example}

\section{Conclusions}
\label{sec:conclusion}

In this work, we presented two methods for constructing distributed space--time block codes in the setting of amplify-and-forward relaying. We have separately considered a single or multiple antennas at the source and each of the relays, resulting in flexible constructive methods to obtain fast-decodable, more specifically $2$-group and (conditionally) $4$-group decodable, full-diversity space--time codes which have nonvanishing determinants for both cases. The obtained codes can be decoded with a low number of antennas -- in the MIMO case even a single antenna suffices -- and their worst-case decoding complexity is reduced by up to $75\%$, which is known to be the best possible reduction for division algebra based codes. These are highly desirable properties for many applications related to the future 5G networks, such as device-to-device communications and proximity-based services on the wireless edge. 

We have further shown how to use these methods to obtain fast-decodable space--time codes for the $K$-user MIMO-MAC scenario. Although codes for this channel cannot exhibit the nonvanishing determinant property due to the nature of the communications setting, the constructed codes using the methods introduced in this work exhibit the conditional nonvanishing determinant property, which is known to be useful for achieving the optimal MAC-DMT.


\end{document}